\documentclass[onecolumn, draftcls, 11pt]{IEEEtran}
\usepackage[cmex10]{amsmath}
\usepackage{amssymb}
\usepackage{amsthm}
\newtheorem{theorem}{Theorem}

\newtheorem{lemma}{Lemma}
\theoremstyle{remark}

\usepackage{epsfig}
\usepackage{color}

\DeclareMathOperator*{\Var}{Var}
\DeclareMathOperator*{\Cov}{Cov}

\begin{document}
\title{On the Vacationing CEO Problem: Achievable Rates and Outer Bounds}
\author{\IEEEauthorblockN{Rajiv Soundararajan$^*$, Aaron B. Wagner$^\dagger$ and Sriram Vishwanath$^*$\\}
\IEEEauthorblockA{$^*$Department of ECE, The University of Texas at Austin, Austin, TX 78712, USA\\
$^\dagger$School of ECE, Cornell University, Ithaca, NY 14853, USA
}
}

\maketitle

\begin{abstract}
This paper studies a class of source coding problems that combines elements of the CEO problem with the multiple description problem. In this setting, noisy versions of one remote source are observed by two nodes with encoders (which is similar to the CEO problem). However, it differs from the CEO problem in that each node must generate multiple descriptions of the source. This problem is of interest in multiple scenarios in efficient communication over networks. 
In this paper, an achievable region and an outer bound are presented for this problem, which is shown to be sum rate optimal for a class of distortion constraints.
 
\end{abstract}

\section{Introduction}

Determining the theoretical limits of lossy compression schemes are of significant interest. Results in lossy source coding have applications in multiple domains including  multimedia communication \& storage, image processing and distributed processing over sensor networks. 
A single representation for a single source is today a fairly well established field of research \cite{Cover}. When multiple representations and/or sources are involved, there are only a limited set of exact results known. The lossless compression of correlated sources, studied in \cite{Slepian1973} by Slepian and Wolf, is one of the early success stories in this domain. Subsequently, the Gaussian two-terminal multiple description (MD) rate region was characterized in \cite{Ozarow1980}. More recently, many new results have emerged in the field of Gaussian multiterminal source coding \cite{Oohama1997}, \cite{Krithivasan2007}. In particular, the Gaussian CEO problem was studied in \cite{Oohama1998} and \cite{Prabhakaran2004}, where the sum rate and the entire rate region were characterized.
\cite{Wang2009} provides a simplified converse argument for the sum rate. The rate region of the Gaussian two-encoder problem was characterized in \cite{Wagner2006}.

We consider a version of the CEO problem in which the CEO can  ``vacation.'' The setup is described by Figure 1.  We have a single source $S$, and two corrupted versions of the source $X_1 = S + N_1$ and $X_2 = S + N_2$ are available at the two encoders in the system. The encoders wish to communicate information about $S$ to a decoder, i.e., the CEO, which they accomplish by each sending a data packet at time 1 and another at time 2. The CEO may be on vacation during time 1, time 2, neither, or both, and she cannot receive data packets when she is vacationing.  We assume that the CEO's holiday schedule is unknown to the encoders. If the CEO works during time 1 and vacations during time 2, she expects to reproduce the source $S$ to distortion $D_{1}$. Likewise, if she works during time 2 and vacations during time 1, she expects to reproduce $S$ to distortion $D_2$. If the CEO eschews vacation and works during both periods, then she expects to reproduce $S$ to distortion $D_0$. For convenience, we represent the three vacation states of the CEO by three separate receivers in Figure 1. 
Details on the system model and problem at hand are presented in Section \ref{sec:probdef}. 

This problem generalizes both the CEO problem, by omitting the transmission at time 2, and the MD problem, by omitting the noises $N_1$ and $N_2$.
%The MD problem is also generalized through a multiuser successive refinement problem in \cite{Tian2008}. 
We note that a related problem is considered in \cite{Chen2008}, which also generalizes both CEO and MD. However, unlike the other problem formulation, we are able to obtain a more conclusive sum rate result. The vacationing-CEO problem arises in multicast networks in which receivers enter and depart the systems at arbitrary times. See \cite{Ahmed:ISIT09,Ahmed:ITW09} for additional discussion of this connection.

The rest of this paper is organized as follows: In the next section, we state the problem and describe the main result of the paper, which characterizes the sum rate of the Gaussian problem with CEO vacations. In Section \ref{sec:ach}, we show that the sum rate described in Section \ref{sec:probdef} is achievable, and in Section \ref{sec:ob}, we provide a lower bound for the Gaussian vacationing-CEO problem. This lower bound combines converse techniques developed individually for the MD problem~\cite{Ozarow1980,Wang2007}  and the CEO problem~\cite{Oohama1998,Wang2009}. In fact, it is interesting to note that our lower bound requires the use of \emph{both} converse techniques for the CEO problem, as neither alone is sufficient.
In Section \ref{sec:comp}, we establish the equivalence of the achievable sum rate and the lower bound on it, thus proving the main result of the paper.

\subsection{Notation}
We use capital letters to denote random variables and $\mathbb{E}[S]$ to denote the expected value of a random variable $S$. All logarithms used in the paper are natural logarithms. $\Var(S|T)$ denotes $\mathbb{E}_{S,T}[(S-\mathbb{E}[S|T])^{2}]$. For $\bar{S}=(S_{1},S_{2})$, $\Cov(\bar{S}|T)$ denotes the matrix $\mathbb{E}_{\bar{S},T}[(\bar{S}-\mathbb{E}[\bar{S}|T])(\bar{S}-\mathbb{E}[\bar{S}|T])^{\dagger}]$, where $\bar{S}^{\dagger}$ is the transpose of the vector $\bar{S}$.

%%%%%%%%%%%%%%%%%%%%%%%%%%%%%%%%%%%%%%%%%%%%%%%%%%%%%%%%%%%%%%%%%%%%%%%%%%%%%%%%%%%%%%%%%%%%%%%%%%%%%%%%%%%%%%%%%%%%%%%%%%%%%
\section{Problem Statement and Main Result}
\label{sec:probdef}

\begin{figure}[!th]\label{fig:model}
\centering
\scalebox{0.5}{
\input{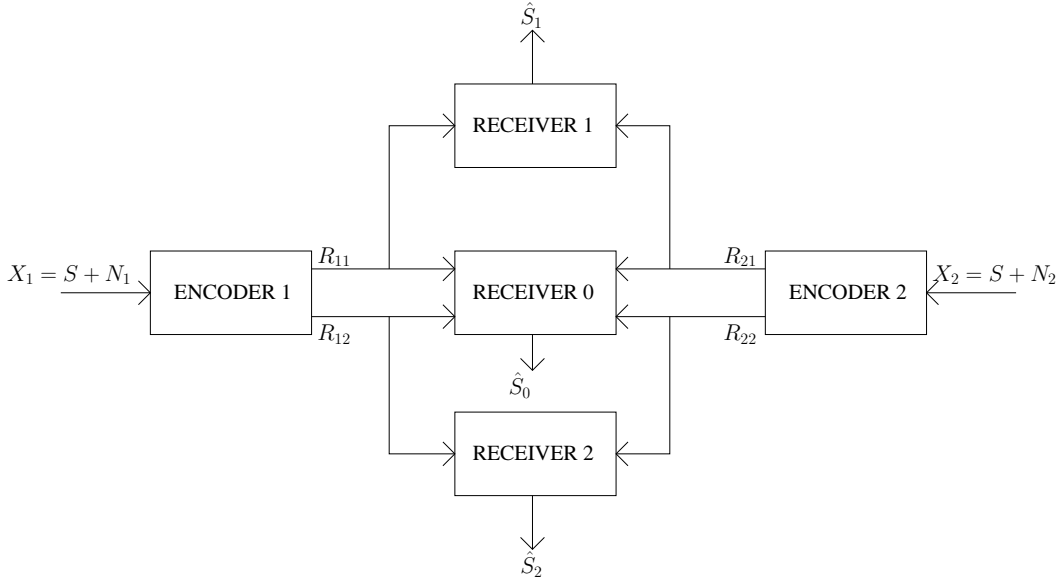}}
\caption{System Model}
\end{figure}
Let $\{X_{1i}\}_{i=1}^{n}$ and $\{X_{2i}\}_{i=1}^{n}$ be noisy observations of an underlying Gaussian source $\{S_{i}\}_{i=1}^{n}$, observed by two different encoders. The observations and the source are assumed to be independent and identically distributed (i.i.d.) over $i$. For each time instant $i$, the observations are given by 
\begin{align*}
X_{1i} = S_{i}+N_{1i}\\
X_{2i} = S_{i}+N_{2i}
\end{align*}
where $N_{1i}$ and $N_{2i}$ are Gaussian distributed with mean zero and variance $\sigma_{N_{1}}^{2}$ and $\sigma_{N_{2}}^{2}$. $S_{i}$ has mean zero and variance $\sigma_{S}^{2}$. 
Encoder $k$ observes $\{X_{ki}\}_{i=1}^{n}$ for $k=1,2$ and sends two descriptions given by $C_{kl}=f_{kl}(X_{k}^{n})$, for $l=1,2$ to two receivers. Let $R_{kl}\geq\frac{1}{n}\log |C_{kl}|$. Receiver $l$, gets the messages $f_{1l}(X_{1}^{n})$ and $f_{2l}(X_{2}^{n})$, and applies decoding function $\varphi_{l}^{n}(f_{1l}(X_{1}^{n}),f_{2l}(X_{2}^{n}))$ to obtain an estimate of the source $S^{n}$ , denoted by $\hat{S}_{l}^{n}$.The central receiver gets all the four descriptions and applies the function  $\varphi_{0}^{n}(f_{11}(X_{1}^{n}),f_{21}(X_{2}^{n}),f_{12}(X_{1}^{n}),f_{22}(X_{2}^{n}))$ to get $\hat{S}_{0}^{n}$. We say that the tuple $(R_{11},R_{12},R_{21},R_{22},D_{1},D_{2},D_{0})$ is \emph{achievable} if there exist encoding and decoding functions such that 
\begin{align*}
D_{l} &\geq \frac{1}{n}\sum_{i=1}^{n}\mathbb{E}[(S_{i}-\hat{S}_{li})^{2}], \quad l\in\{0,1,2\}.
\end{align*}

We now state the main result of the paper. Let 
\begin{align*}
\mathcal{U} = \{&(U_{11},U_{12},U_{21},U_{22}): U_{kl} = X_{k}+W_{kl} \textrm{ for } k,l\in\{1,2\}, W_{kl}\sim \mathcal{N}(0,\sigma_{W_{kl}}^{2}) ,\\
& (U_{11},U_{12})-X_{1}-X_{2}-(U_{21},U_{22}),\\
& \mathbb{E}[(S-\mathbb{E}[S|U_{1l},U_{2l}])^{2}] \leq D_{l} \textrm{ for } l \in\{1,2\} \textrm{ and }\\
& \mathbb{E}[(S-\mathbb{E}[S|U_{11},U_{12},U_{21},U_{22}])^{2}] \leq D_{0}\}.
\end{align*}
\begin{theorem}
The sum rate of the vacationing-CEO problem with distortion constraints $(D_{1},D_{2},D_{0})$ such that $\frac{1}{D_{1}}+\frac{1}{D_{2}}-\max\{\frac{1}{\sigma_{N_{1}}^{2}},\frac{1}{\sigma_{N_{2}}^{2}}\}-\frac{1}{\sigma_{S}^{2}}\geq \frac{1}{D_{0}}$ is given by
\begin{equation*}
\inf_{(U_{11},U_{12},U_{21},U_{22})\in\mathcal{U}} I(X_{1},X_{2};U_{11},U_{12},U_{21},U_{22})+I(U_{11},U_{21};U_{12},U_{22}).
\end{equation*}
\end{theorem}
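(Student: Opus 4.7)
The plan is to prove the theorem by matching achievability and converse bounds, as organized in Sections \ref{sec:ach}--\ref{sec:comp}. A useful preliminary identity is the following. Within $\mathcal{U}$, the long Markov chain $(U_{11},U_{12})-X_1-X_2-(U_{21},U_{22})$ together with the fact that $U_{k1}$ and $U_{k2}$ are conditionally independent given $X_k$ (immediate from the independence of the $W_{kl}$'s) yields the auxiliary Markov condition $(U_{11},U_{21})\perp(U_{12},U_{22})\mid(X_1,X_2)$. A chain-rule manipulation then rewrites the target sum rate equivalently as
\begin{equation*}
I(X_1,X_2;U_{11},U_{21})+I(X_1,X_2;U_{12},U_{22}),
\end{equation*}
which is the convenient form for the achievability analysis.

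For achievability I would use a Berger--Tung coding scheme with random binning, extended to two descriptions per encoder. Encoder $k$ generates jointly typical codewords $U_{k1}^n$ and $U_{k2}^n$ given $X_k^n$ and sends random bin indices $C_{k1},C_{k2}$. Side decoder $l$ recovers $(U_{1l}^n,U_{2l}^n)$ via a Slepian--Wolf-style joint-typicality search and outputs $\hat{S}_l=\mathbb{E}[S\mid U_{1l},U_{2l}]$; the central decoder recovers all four codewords and outputs $\hat{S}_0=\mathbb{E}[S\mid U_{11},U_{12},U_{21},U_{22}]$. The standard Berger--Tung argument at each side decoder yields the constraint $R_{1l}+R_{2l}\geq I(X_1,X_2;U_{1l},U_{2l})$, and the MMSE distortion levels are exactly those enforced by membership in $\mathcal{U}$. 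Summing over $l=1,2$ and invoking the identity above produces the sum rate of the theorem.

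The converse is the main obstacle, and is where the hypothesis $1/D_1+1/D_2-\max\{1/\sigma_{N_1}^2,1/\sigma_{N_2}^2\}-1/\sigma_S^2\geq 1/D_0$ becomes essential. Following the outline signaled in Section \ref{sec:ob}, I would construct time-sharing auxiliaries from the messages $C_{kl}$ together with past observations, reduce the sum rate to a single-letter lower bound, and then splice together three converse ingredients: an Ozarow/Wang-style MD inequality that couples the two pairs of descriptions through $D_0$, and two distinct CEO converse techniques (an EPI-based argument in the spirit of Oohama and a Fisher-information argument in the spirit of Wang) applied at the two side decoders. The two CEO converses are complementary, and as the introduction explicitly notes, neither is sufficient on its own: one controls the source-to-description part of the rate while the other absorbs the residual conditioning coming from the central decoder. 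The distortion hypothesis is what keeps this spliced bound tight against the Gaussian test-channel evaluation of the infimum, which I would then verify in Section \ref{sec:comp} by a Lagrangian/KKT computation showing that the achievability and converse expressions coincide.
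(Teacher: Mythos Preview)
Your proposal contains a fundamental misreading of the set $\mathcal{U}$ that invalidates the entire argument. You claim that ``$U_{k1}$ and $U_{k2}$ are conditionally independent given $X_k$ (immediate from the independence of the $W_{kl}$'s),'' but the definition of $\mathcal{U}$ does \emph{not} require the noises $W_{k1}$ and $W_{k2}$ at a single encoder to be independent; it only specifies their marginal variances and imposes the long Markov chain $(U_{11},U_{12})-X_1-X_2-(U_{21},U_{22})$, which separates encoder~1's noises from encoder~2's noises but says nothing about the joint law of $(W_{k1},W_{k2})$. Indeed, the achievable scheme in Section~\ref{sec:ach} builds the covariance matrix $\mathbf{K}_w$ with off-diagonal entries $-a_k$, and this negative correlation is precisely the multiple-description degree of freedom that lets the side and central distortions be traded off. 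With your assumed conditional independence, the identity you derive would collapse the target expression to $I(X_1,X_2;U_{11},U_{21})+I(X_1,X_2;U_{12},U_{22})$, i.e., the sum of two decoupled CEO rates, and the infimum over $\mathcal{U}$ would then be strictly larger than the true sum rate whenever the central constraint $D_0$ is not already implied by $D_1,D_2$. Your Berger--Tung achievability would thus fail to reach the theorem's expression.

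The converse sketch is also off-track in a way that matters. The paper does not single-letterize via time-sharing auxiliaries; it works directly with the code functions $C_{kl}$, introduces Gaussian perturbations $Y_k=X_k+Z_k$ with free variances $\sigma_{Z_k}^2$, and uses the entropy power inequality to bound $I(Y_k^n;C_{k1},C_{k2}|S^n)$ (this is where the term $\sigma_{N_k}^2e^{-2t_k}+\sigma_{Z_k}^2$ appears). The resulting lower bound is an inf over a parameter set $\mathcal{P}$ and a sup over the $\sigma_{Z_k}^2$; matching it to achievability in Section~\ref{sec:comp} is not a KKT computation but a case analysis that, for each $\bar p\in\mathcal{P}$, selects the correlation $a_k$ (possibly nonzero) and the perturbation variance $\sigma_{Z_k}^2$ so that $I(U_{k1};U_{k2}|S,Y_k)=0$. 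The hypothesis $1/D_1+1/D_2-\max\{1/\sigma_{N_1}^2,1/\sigma_{N_2}^2\}-1/\sigma_S^2\geq 1/D_0$ enters exactly here (Lemma~\ref{thm:paramlemma}) to rule out parameter configurations for which no such choice exists. None of this structure survives once you force $a_k=0$.
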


The technical condition on the distortions implies that the distortion constraint at the central receiver satisfies $\max\{\frac{1}{D_{1}},\frac{1}{D_{2}}\}\leq\frac{1}{D_{0}}\leq\frac{1}{D_{1}}+\frac{1}{D_{2}}-\max\{\frac{1}{\sigma_{N_{1}}^{2}},\frac{1}{\sigma_{N_{2}}^{2}}\}-\frac{1}{\sigma_{S}^{2}}$. This means that the central distortion constraint is comparable with (although lesser than) the distortion constraints at the individual receivers. Note that the sum rate achieved here is a more general version of the achievable sum rate of the CEO problem with two sensors and the MD problem with two descriptions. In effect, the vacationing-CEO problem with just one sensor and encoder (or $R_{21}=0$ and $R_{22}=0$) is a remote source version of the two description problem. In the absence of the central receiver (or $D_{0}\geq\sigma_{S}^{2}$), the vacationing-CEO problem reduces to two CEO problems corresponding to Receivers 1 and 2. We discuss the achievability in the following section.

%%%%%%%%%%%%%%%%%%%%%%%%%%%%%%%%%%%%%%%%%%%%%%%%%%%%%%%%%%%%%%%%%%%%%%%%%%%%%%%%%%%%%%%%%%%%%%%%%%%%%%%%%%%%%%%%%%%%%%%%%%%%%%
\section{Achievability}\label{sec:ach}
The achievable scheme discussed below is a Gaussian scheme. We define auxiliaries, $U_{11}$,$U_{12}$,$U_{21}$ and $U_{22}$ such that
\begin{align*}
U_{11} &= X_{1}+W_{11}\\
U_{12} &= X_{1}+W_{12}\\
U_{21} &= X_{2}+W_{21}\\
U_{22} &= X_{2}+W_{22},
\end{align*}
where the vector $\mathbf{W}=(W_{11},W_{12},W_{21},W_{22})$ is Gaussian distributed with mean zero and covariance matrix
\begin{equation*}
\mathbf{K}_{w} = \left[\begin{array}{ccccc} \sigma_{W_{11}}^{2} & -a_{1} & 0 & 0 \\
-a_{1} & \sigma_{W_{12}}^{2} & 0 & 0 \\
0 & 0 & \sigma_{W_{21}}^{2} & -a_{2}\\
0 & 0 & -a_{2} & \sigma_{W_{22}}^{2}
\end{array}\right].
\end{equation*}
$\mathbf{K}_{w}$ is appropriately chosen to meet the distortion constraints at Receivers 1 and 2 and the central receiver. In effect $\mathbf{K}_{w}$ is chosen such that 
\begin{align*}
\mathbb{E}\left[\left(S-\mathbb{E}\left[S|U_{1l},U_{2l}\right]\right)^{2}\right] &\leq D_{l}, l\in\{1,2\}\\
\mathbb{E}\left[\left(S-\mathbb{E}\left[S|U_{11},U_{12},U_{21},U_{22}\right]\right)^{2}\right] &\leq D_{0}.
\end{align*}

\subsection{Codebook Generation}
Encoder $k$, $k=1,2$, generates $2^{nR_{k1}'}$ $U_{k1}^{n}$ and $2^{nR_{k2}'}$ $U_{k2}^{n}$ such that $U_{k1i}$ and $U_{k2i}$ are generated i.i.d. according to the marginal of $U_{k1}$ and $U_{k2}$ respectively. $2^{nR_{k1}'}$ $U_{k1}^{n}$ and $2^{nR_{k2}'}$ $U_{k2}^{n}$ are binned into $2^{nR_{k1}}$ and $2^{nR_{k2}}$ bins respectively. 

\subsection{Encoding}
Encoder $k$ chooses the pair $(U_{k1}^{n},U_{k2}^{n})$ jointly typical with $X_{k}^{n}$ and transmits the respective bin indexes. There exists a pair $(U_{k1}^{n},U_{k2}^{n})$ jointly typical with $X_{k}^{n}$ with high probability if 
\begin{align}
R_{k1}' &> I(X_{k};U_{k1})\nonumber\\
R_{k2}' &> I(X_{k};U_{k2})\label{eqn:generrors}\\
R_{k1}'+R_{k2}' &> I(X_{k};U_{k1},U_{k2})+I(U_{k1};U_{k2}).  \nonumber
\end{align}
This multiple description encoding scheme is similar to the scheme in \cite{Cover1982}. Since $(U_{11}^{n},U_{12}^{n})-X_{1}^{n}-X_{2}^{n}-(U_{21}^{n},U_{22}^{n})$, by the Markov lemma (Lemma 14.8.1) in \cite{Cover}, we also have that $(U_{11}^{n},U_{12}^{n},U_{21}^{n},U_{22}^{n})$ are jointly typical.

\subsection{Decoding at individual receivers}
Receiver $l$, $l=1,2$, looks for $U_{1l}^{n}$ and $U_{2l}^{n}$ that are jointly typical in the bins corresponding to the bin indexes it receives. Receiver $l$  will be able to find unique codewords $U_{1l}^{n}$ and $U_{2l}^{n}$ that are jointly typical if
\begin{align}
R_{1l} &> R_{1l}'-I(U_{1l};U_{2l})\nonumber\\
R_{2l} &> R_{2l}'-I(U_{1l};U_{2l})\label{eqn:decerrors}\\
R_{1l}+R_{2l} &> R_{1l}'+R_{2l}'-I(U_{1l};U_{2l}). \nonumber
\end{align}
Receiver $l$ generates an estimate of $S^{n}$, by constructing the minimum mean squared estimate (MMSE)  $\mathbb{E}[S^{n}|U_{1l}^{n},U_{2l}^{n}]$. The decoding scheme resembles the decoding in the Berger-Tung scheme \cite{Tung}. 

\subsection{Decoding at central receiver}
Receiver 0 mimics the decoding at Receiver 1 and 2 to find jointly typical pairs $(U_{11}^{n},U_{21}^{n})$ and $(U_{12}^{n},U_{22}^{n})$ in the received bin indexes. Therefore, Receiver 0 will be able to find such unique codewords if the rates satisfy (\ref{eqn:decerrors}). Since $(U_{11}^{n},U_{12}^{n},U_{21}^{n},U_{22}^{n})$ are jointly typical, Receiver 0 constructs the MMSE estimate of $S^{n}$ given by $\mathbb{E}[S^{n}|U_{11}^{n},U_{12}^{n},U_{21}^{n},U_{22}^{n}]$.

Note that the equations in (\ref{eqn:generrors}) and (\ref{eqn:decerrors}) represent the entire rate region achievable by the Gaussian scheme for the vacationing-CEO problem. We now consider the sum rate achievable by the Gaussian scheme. 
\begin{lemma}\label{thm:achsum}
The sum rate achievable by the Gaussian scheme is given by 
\begin{equation}
\inf_{(U_{11},U_{12},U_{21},U_{22})\in\mathcal{U}}I(X_{1},X_{2};U_{11},U_{12},U_{21},U_{22})+I(U_{11},U_{21};U_{12},U_{22}). \label{eqn:ach_sum}
\end{equation}
\end{lemma}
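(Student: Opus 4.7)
My plan is to minimize $\sum_{k,l}R_{kl}$ directly over the rate region defined by (\ref{eqn:generrors}) and (\ref{eqn:decerrors}) and then identify the result with (\ref{eqn:ach_sum}). Only the sum-rate inequality in each block turns out to be binding: for (\ref{eqn:generrors}) one checks $I(X_k;U_{k1})+I(X_k;U_{k2})\leq I(X_k;U_{k1},U_{k2})+I(U_{k1};U_{k2})$, with slack $I(U_{k1};U_{k2}\mid X_k)\geq 0$, so the two individual constraints can always be met simultaneously after saturating the sum constraint. Summing the third inequality of (\ref{eqn:decerrors}) over $l$ gives
\begin{equation*}
\sum_{k,l} R_{kl} \geq \sum_{k,l} R_{kl}' - I(U_{11};U_{21}) - I(U_{12};U_{22}),
\end{equation*}
and summing the third inequality of (\ref{eqn:generrors}) over $k$ gives
\begin{equation*}
\sum_{k,l} R_{kl}' \geq I(X_{1};U_{11},U_{12}) + I(X_{2};U_{21},U_{22}) + I(U_{11};U_{12}) + I(U_{21};U_{22}).
\end{equation*}

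The remaining work is two information-theoretic identities. First, the Markov chain $(U_{11},U_{12})-X_1-X_2-(U_{21},U_{22})$ baked into $\mathcal{U}$ yields $H(U_{11},U_{12},U_{21},U_{22}\mid X_1,X_2)=H(U_{11},U_{12}\mid X_1)+H(U_{21},U_{22}\mid X_2)$, from which a short expansion produces
\begin{equation*}
I(X_{1};U_{11},U_{12}) + I(X_{2};U_{21},U_{22}) = I(X_{1},X_{2};U_{11},U_{12},U_{21},U_{22}) + I(U_{11},U_{12};U_{21},U_{22}).
\end{equation*}
Second, expanding every mutual information in the leftover combination into entropies, all four single-variable entropies and also the pair entropies $H(U_{11},U_{12})$ and $H(U_{21},U_{22})$ cancel, leaving the purely combinatorial identity
\begin{equation*}
I(U_{11},U_{12};U_{21},U_{22}) + I(U_{11};U_{12}) + I(U_{21};U_{22}) - I(U_{11};U_{21}) - I(U_{12};U_{22}) = I(U_{11},U_{21};U_{12},U_{22}).
\end{equation*}

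Substituting both identities into the lower bound on $\sum_{k,l} R_{kl}$ reproduces the objective of (\ref{eqn:ach_sum}), and tightness follows by saturating the third inequality in each block (which, by the slack computation above, remains feasible for the individual constraints). Taking the infimum over $(U_{11},U_{12},U_{21},U_{22})\in\mathcal{U}$ then completes the argument.

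I expect the main obstacle to be organizational rather than conceptual: both identities are elementary but involve many entropy terms, so it is easy to lose sight of which cancellations occur. The only genuinely non-routine step is the first identity, which packages the two per-encoder mutual informations $I(X_k;U_{k1},U_{k2})$ into a single joint mutual information with the full auxiliary tuple plus a ``coupling'' mutual information $I(U_{11},U_{12};U_{21},U_{22})$; this step relies essentially on the cross-Markov condition, and without it the Lemma would not reduce to the compact form of (\ref{eqn:ach_sum}).
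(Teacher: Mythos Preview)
Your proposal is correct and is essentially the same argument as the paper's: both saturate the sum constraints in (\ref{eqn:generrors}) and (\ref{eqn:decerrors}) and then collapse $I(X_1;U_{11},U_{12})+I(X_2;U_{21},U_{22})+I(U_{11};U_{12})+I(U_{21};U_{22})-I(U_{11};U_{21})-I(U_{12};U_{22})$ to the target expression via the Markov chain. The only difference is presentational---the paper exhibits an explicit corner-point rate tuple and verifies feasibility directly, whereas you argue feasibility abstractly through the slack $I(U_{k1};U_{k2}\mid X_k)\ge 0$ and then spell out the two entropy identities that the paper compresses into a single line.
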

The lemma is proved in Appendix \ref{sec:lemachsum}. We present the lower bound on the sum rate in the next section.

%%%%%%%%%%%%%%%%%%%%%%%%%%%%%%%%%%%%%%%%%%%%%%%%%%%%%%%%%%%%%%%%%%%%%%%%%%%%%%%%%%%%%%%%%%%%%%%%%%%%%%%%%%%%%%%%%%%%%%%%%%%%%%
\section{Lower Bound}\label{sec:ob}
We now make a few definitions before presenting the lower bound on the sum rate. $C_{kl}$ denotes the code from Encoder $k$ to Receiver $l$ for $k=1,2$ and $l=1,2$.

Define, 
\begin{align}
d_{11} &= \frac{1}{n}\sum_{i=1}^{n}\Var(X_{1i}|C_{11},S^{n})  &d_{21} &= \frac{1}{n}\sum_{i=1}^{n}\Var(X_{2i}|C_{21},S^{n}) \nonumber\\
d_{12} &= \frac{1}{n}\sum_{i=1}^{n}\Var(X_{1i}|C_{12},S^{n})  &d_{22} &= \frac{1}{n}\sum_{i=1}^{n}\Var(X_{2i}|C_{22},S^{n}) \label{eqn:param}\\
t_{1} &= \frac{1}{n}I(X_{1}^{n};C_{11},C_{12}|S^{n})  &t_{2} &= \frac{1}{n}I(X_{2}^{n};C_{21},C_{22}|S^{n}).\nonumber
\end{align}
We remark that in the following $0<D_{0}<\min\{D_{1},D_{2}\}$ and $\max\{D_{1},D_{2}\}<\sigma_{S}^{2}$. We now define for $k=1,2$, 
\begin{align*}
\mathcal{F}_{k} = \{&(d_{1},d_{2},t): d_{1},d_{2},t \in [0,\infty)\\ 
&\sigma_{N_{k}}^{2}e^{-2t}\leq \min\{d_{1},d_{2}\} \quad \max\{d_{1},d_{2}\} \leq \sigma_{N_{k}}^{2}\}.
\end{align*}
Further define, 
\begin{align}
\mathcal{F} = \{&(d_{11},d_{12},d_{21},d_{22},t_{1},t_{2}) :(d_{k1},d_{k2},t_{k})\in\mathcal{F}_{k}, k=1,2\nonumber\\
&\frac{1}{D_{1}} \leq \frac{1}{\sigma_{S}^{2}}+\frac{1}{\sigma_{N_{1}}^{2}}+\frac{1}{\sigma_{N_{2}}^{2}}-\frac{d_{11}}{\sigma_{N_{1}}^{4}}-\frac{d_{21}}{\sigma_{N_{2}}^{4}}\label{eqn:d1con}\\
&\frac{1}{D_{2}} \leq
\frac{1}{\sigma_{S}^{2}}+\frac{1}{\sigma_{N_{1}}^{2}}+\frac{1}{\sigma_{N_{2}}^{2}}-\frac{d_{12}}{\sigma_{N_{1}}^{4}}-\frac{d_{22}}{\sigma_{N_{2}}^{4}}\label{eqn:d2con}\\
&\frac{1}{D_{0}}\leq \frac{1}{\sigma_{S}^{2}} + \frac{1-e^{-2t_{1}}}{\sigma_{N_{1}}^{2}} + \frac{1-e^{-2t_{2}}}{\sigma_{N_{2}}^{2}}\label{eqn:d0con}\}.
\end{align} 
We have the following lemma which characterizes the parameters $\bar{p} = (d_{11},d_{12},d_{21},d_{22},t_{1},t_{2})$ defined above. 
\begin{lemma}\label{thm:lemparam}
The parameters defined in (\ref{eqn:param}) satisfy $\bar{p}\in\mathcal{F}$.
\end{lemma}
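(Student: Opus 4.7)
The plan is to verify each of the constraints defining $\mathcal{F}$ in turn. I would split the argument into three parts: the membership $(d_{k1},d_{k2},t_k)\in\mathcal{F}_k$, the individual-receiver distortion inequalities~(\ref{eqn:d1con}) and~(\ref{eqn:d2con}), and the central-receiver distortion inequality~(\ref{eqn:d0con}).

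For $(d_{k1},d_{k2},t_k)\in\mathcal{F}_k$, non-negativity of $d_{kl}$ and $t_k$ is immediate. The upper bound $d_{kl}\leq\sigma_{N_k}^2$ follows from the fact that conditioning reduces variance, combined with $\Var(X_{ki}|S^n)=\Var(N_{ki})=\sigma_{N_k}^2$. For the lower bound $\sigma_{N_k}^2 e^{-2t_k}\leq d_{kl}$, I would use the chain $nt_k=I(X_k^n;C_{k1},C_{k2}|S^n)\geq I(X_k^n;C_{kl}|S^n)=h(X_k^n|S^n)-h(X_k^n|C_{kl},S^n)$, and then bound $h(X_k^n|C_{kl},S^n)\leq\sum_i h(X_{ki}|C_{kl},S^n)\leq \tfrac{n}{2}\log(2\pi e\, d_{kl})$ by Gaussian maximum entropy together with the concavity of the logarithm; rearranging gives the desired inequality for each $l$.

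For the individual distortion bounds, the key tool is the ``$\Sigma$-formula'' used in the Gaussian two-encoder converse of Wagner \emph{et al.}~\cite{Wagner2006} and Wang~\cite{Wang2009}, which in the present setting yields the per-letter estimate
\[
\frac{1}{\Var(S_i|C_{1l},C_{2l})} \leq \frac{1}{\sigma_S^2} + \sum_{k=1,2}\left(\frac{1}{\sigma_{N_k}^2}-\frac{\Var(X_{ki}|C_{kl},S^n)}{\sigma_{N_k}^4}\right).
\]
Using the fact that the MMSE is the optimal estimator to get $D_l\geq \frac{1}{n}\sum_i \Var(S_i|C_{1l},C_{2l})$, together with Jensen's inequality (convexity of $1/x$) and the affinity of the right-hand side in the per-letter conditional variances $\Var(X_{ki}|C_{kl},S^n)$, one recovers (\ref{eqn:d1con}) for $l=1$ and symmetrically (\ref{eqn:d2con}) for $l=2$.

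For the central distortion (\ref{eqn:d0con}), I would apply the EPI-based CEO converse of Oohama~\cite{Oohama1998}. The Markov chain $(C_{11},C_{12})-X_1^n-S^n-X_2^n-(C_{21},C_{22})$ (immediate from the encoding rule) lets the conditional entropy power inequality be applied to the noise processes after conditioning on the codes, producing a per-letter bound
\[
\frac{1}{\Var(S_i|C_{11},C_{12},C_{21},C_{22})} \leq \frac{1}{\sigma_S^2} + \sum_{k=1,2}\frac{1-e^{-2t_{k,i}}}{\sigma_{N_k}^2}
\]
for per-letter quantities $t_{k,i}\geq 0$ satisfying $\frac{1}{n}\sum_i t_{k,i}=t_k$. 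Combining this with $D_0\geq \frac{1}{n}\sum_i \Var(S_i|C_{11},C_{12},C_{21},C_{22})$ and invoking Jensen's inequality (convexity of $1/x$ and concavity of $1-e^{-2t}$) delivers (\ref{eqn:d0con}). The main obstacle is this last step: unlike the $\Sigma$-formula, which is linear in the per-letter conditional variances and hence tensorizes cleanly under averaging, the CEO-style bound involves the exponentials $e^{-2t_{k,i}}$, so one must invoke EPI carefully for the conditional differential entropies and then concavify to pass from per-letter mutual informations back to the aggregates $t_1, t_2$. It is exactly this need to combine the linear $\Sigma$-formula for the two-terminal receivers with the EPI-based CEO bound for the central receiver that is anticipated by the paper's remark that both CEO converse techniques are required.
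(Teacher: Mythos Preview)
Your plan is correct and matches the paper's proof, which simply cites Lemma~3.1 of~\cite{Prabhakaran2004} for~(\ref{eqn:d0con}) and Theorem~1 of~\cite{Wang2009} for~(\ref{eqn:d1con})--(\ref{eqn:d2con}) and then handles the $\mathcal{F}_k$ membership exactly as you describe (drop one description, apply the Gaussian maximum-entropy bound with concavity of $\log$). One minor remark on your sketch of~(\ref{eqn:d0con}): the Oohama/Prabhakaran argument applies EPI at the \emph{block} level (to $h(S^n\mid C_{11},C_{12},C_{21},C_{22})$ via a sufficient-statistic decomposition) and delivers the averaged inequality in $t_1,t_2$ directly, so the intermediate per-letter bound with auxiliaries $t_{k,i}$ and the subsequent Jensen step in $t$ that you describe are neither part of nor needed for that derivation.
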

\begin{proof}
The proof that 
\begin{equation*}
\frac{1}{D_{0}}\leq \frac{1}{\sigma_{S}^{2}} + \frac{1-e^{-2t_{1}}}{\sigma_{N_{1}}^{2}} + \frac{1-e^{-2t_{2}}}{\sigma_{N_{2}}^{2}}
\end{equation*}
follows directly from Lemma 3.1 in \cite{Prabhakaran2004}. Also, in Theorem 1 in \cite{Wang2009}, it is shown that
\begin{equation*}
\frac{1}{D_{l}} \leq
\frac{1}{\sigma_{S}^{2}}+\frac{1}{\sigma_{N_{1}}^{2}}+\frac{1}{\sigma_{N_{2}}^{2}}-\frac{d_{1l}}{\sigma_{N_{1}}^{4}}-\frac{d_{2l}}{\sigma_{N_{2}}^{4}}
\end{equation*}
for $l=1,2$. 
By definition, 
\begin{align*}
nt_{k}=I(X_{k}^{n},C_{k1},C_{k2}|S^{n})&=h(X_{k}^{n}|S^{n})-h(X_{k}^{n}|C_{k1},C_{k2},S^{n})\\
&\geq \frac{n}{2}\log \sigma_{N_{k}}^{2}-h(X_{k}^{n}|C_{kl},S^{n}), l=1,2\\
&\geq \frac{n}{2}\log \sigma_{N_{k}}^{2}-\frac{n}{2}\log d_{kl}, l=1,2.
\end{align*}
Therefore for $k=1,2$,
\begin{equation*}
\sigma_{N_{k}}^{2}e^{-2t_{k}} \leq \min\{d_{k1},d_{k2}\}. 
\end{equation*}
Also, since $\mathbb{E}[N_{k}^{n}|C_{kl}]$ achieves a smaller mean squared error in $N_{k}^{n}$ than any other estimator, 
\begin{equation*}
d_{kl} = \frac{1}{n}\sum_{i=1}^{n}\Var(X_{ki}|C_{kl},S^{n})=\frac{1}{n}\sum_{i=1}^{n}\Var(N_{ki}|C_{kl}) \leq \sigma_{N_{k}}^{2}
\end{equation*}
for $k=1,2$ and $l=1,2$. 
This concludes the proof of the lemma. 
\end{proof}
Define, 
\begin{align}
\mathcal{P}_{1} = \{&(d_{11},d_{12},d_{21},d_{22},t_{1},t_{2}) :(d_{11},d_{12},d_{21},d_{22},t_{1},t_{2})\in \mathcal{F}\nonumber\\
&\frac{1}{D_{1}} = \frac{1}{\sigma_{S}^{2}}+\frac{1}{\sigma_{N_{1}}^{2}}+\frac{1}{\sigma_{N_{2}}^{2}}-\frac{d_{11}}{\sigma_{N_{1}}^{4}}-\frac{d_{21}}{\sigma_{N_{2}}^{4}}\label{eqn:d1eq}\\
&\frac{1}{D_{2}} =
\frac{1}{\sigma_{S}^{2}}+\frac{1}{\sigma_{N_{1}}^{2}}+\frac{1}{\sigma_{N_{2}}^{2}}-\frac{d_{12}}{\sigma_{N_{1}}^{4}}-\frac{d_{22}}{\sigma_{N_{2}}^{4}}\label{eqn:d2eq}\\
&\frac{1}{D_{0}}= \frac{1}{\sigma_{S}^{2}} + \frac{1-e^{-2t_{1}}}{\sigma_{N_{1}}^{2}} + \frac{1-e^{-2t_{2}}}{\sigma_{N_{2}}^{2}}\label{eqn:d0eq}\}\\
\mathcal{P}_{2} = \{&(d_{11},d_{12},d_{21},d_{22},t_{1},t_{2}) :(d_{11},d_{12},d_{21},d_{22},t_{1},t_{2})\in \mathcal{F}\nonumber\\
&\frac{1}{D_{1}} = \frac{1}{\sigma_{S}^{2}}+\frac{1}{\sigma_{N_{1}}^{2}}+\frac{1}{\sigma_{N_{2}}^{2}}-\frac{d_{11}}{\sigma_{N_{1}}^{4}}-\frac{d_{21}}{\sigma_{N_{2}}^{4}}\nonumber\\
&\frac{1}{D_{2}} =
\frac{1}{\sigma_{S}^{2}}+\frac{1}{\sigma_{N_{1}}^{2}}+\frac{1}{\sigma_{N_{2}}^{2}}-\frac{d_{12}}{\sigma_{N_{1}}^{4}}-\frac{d_{22}}{\sigma_{N_{2}}^{4}}\nonumber\\
&\frac{1}{D_{0}}< \frac{1}{\sigma_{S}^{2}} + \frac{1-e^{-2t_{1}}}{\sigma_{N_{1}}^{2}} + \frac{1-e^{-2t_{2}}}{\sigma_{N_{2}}^{2}}\nonumber\\
&\sigma_{N_{1}}^{2}e^{-2t_{1}}= \min\{d_{11},d_{12}\} \quad \sigma_{N_{2}}^{2}e^{-2t_{2}}= \min\{d_{21},d_{22}\}\}\nonumber .
\end{align} 
We denote
\begin{equation*}
\mathcal{P} = \mathcal{P}_{1}\cup \mathcal{P}_{2}.
\end{equation*}
Note that the definition of $\mathcal{P}$ imposes the restriction on the parameters to satisfy the individual distortion constraints with equality. The central distortion constraint may be satisfied with equality or the parameters satisfy $\sigma_{N_{k}}^{2}e^{-2t_{k}}=\min\{d_{k1},d_{k2}\}$ for $k=1,2$. We also observe that $\mathcal{P}\subset\mathcal{F}$.\\
Let $\bar{p}\in\mathcal{F}$. Then $\Delta\mathcal{F}_{\bar{p}}$ is defined as 
\begin{align*}
\Delta\mathcal{F}_{\bar{p}} = \{&\Delta\bar{p}=(\Delta d_{11},\Delta d_{12},\Delta d_{21},\Delta d_{22},-\Delta t_{1},-\Delta t_{2}):\Delta d_{11},\Delta d_{12},\Delta d_{21},\Delta d_{22},\Delta t_{1},\Delta t_{2}\in[0,\infty) \textrm{ and }\\ &(d_{11}+\Delta d_{11},d_{12}+\Delta d_{12},d_{21}+\Delta d_{21},d_{22}+\Delta d_{22},t_{1}-\Delta t_{1},t_{2}-\Delta t_{2}) \in \mathcal{P}\}.
\end{align*}
\begin{lemma}\label{thm:deltarem}
$\Delta\mathcal{F}_{\bar{p}}\neq \phi  \quad \forall \bar{p}\in\mathcal{F}$. 
\end{lemma}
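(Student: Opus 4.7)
The plan is to construct an explicit element of $\Delta\mathcal{F}_{\bar{p}}$ by adjusting $\bar{p}$ in two stages. First, I inflate the $d_{kl}$'s until constraints (\ref{eqn:d1con}) and (\ref{eqn:d2con}) hold with equality. Then I decrease the $t_{k}$'s until either (\ref{eqn:d0con}) becomes an equality (placing the new point in $\mathcal{P}_{1}$), or the bounds $\sigma_{N_{k}}^{2}e^{-2t_{k}}\le\min\{d_{k1},d_{k2}\}$ saturate for both $k=1,2$ (placing it in $\mathcal{P}_{2}$).

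For Stage~1, fix $l\in\{1,2\}$ and look for $\Delta d_{1l},\Delta d_{2l}\in[0,\infty)$ with $d_{kl}+\Delta d_{kl}\le\sigma_{N_{k}}^{2}$ such that $\frac{\Delta d_{1l}}{\sigma_{N_{1}}^{4}}+\frac{\Delta d_{2l}}{\sigma_{N_{2}}^{4}}$ equals the slack in (\ref{eqn:d1con}) (respectively (\ref{eqn:d2con})). This slack is non-negative by hypothesis, while the maximum value attainable for the left-hand side under the box constraint equals $\frac{1}{\sigma_{N_{1}}^{2}}-\frac{d_{1l}}{\sigma_{N_{1}}^{4}}+\frac{1}{\sigma_{N_{2}}^{2}}-\frac{d_{2l}}{\sigma_{N_{2}}^{4}}$, and a direct comparison shows this maximum dominates the slack precisely when $\frac{1}{D_{l}}\ge\frac{1}{\sigma_{S}^{2}}$; this is guaranteed by the standing assumption $\max\{D_{1},D_{2}\}<\sigma_{S}^{2}$. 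Setting $\tilde{d}_{kl}=d_{kl}+\Delta d_{kl}$, the inequality $\sigma_{N_{k}}^{2}e^{-2t_{k}}\le\min\{\tilde{d}_{k1},\tilde{d}_{k2}\}$ continues to hold at the original $t_{k}$ because the minimum only grew, and the box bound $\tilde{d}_{kl}\le\sigma_{N_{k}}^{2}$ is built in.

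For Stage~2, let $\tilde{\tau}_{k}=\frac{1}{2}\log\bigl(\sigma_{N_{k}}^{2}/\min\{\tilde{d}_{k1},\tilde{d}_{k2}\}\bigr)$, so that $t_{k}\ge\tilde{\tau}_{k}$. Viewed as a function of $(t_{1},t_{2})$, the right-hand side of (\ref{eqn:d0con}) is continuous and strictly increasing in each argument. If its value at $(\tilde{\tau}_{1},\tilde{\tau}_{2})$ is at most $\frac{1}{D_{0}}$, an intermediate-value argument along the segment from $(t_{1},t_{2})$ to $(\tilde{\tau}_{1},\tilde{\tau}_{2})$ yields $(t_{1}',t_{2}')$ making (\ref{eqn:d0eq}) hold, placing the resulting point in $\mathcal{P}_{1}$. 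Otherwise, taking $t_{k}'=\tilde{\tau}_{k}$ saturates both Markov-type constraints while (\ref{eqn:d0con}) remains strict, placing the point in $\mathcal{P}_{2}$. Either way, $(\tilde{d}_{11}-d_{11},\tilde{d}_{12}-d_{12},\tilde{d}_{21}-d_{21},\tilde{d}_{22}-d_{22},-(t_{1}-t_{1}'),-(t_{2}-t_{2}'))$ is the desired element of $\Delta\mathcal{F}_{\bar{p}}$. The only delicate point is the feasibility check in Stage~1, which turns on $D_{l}<\sigma_{S}^{2}$; everything else is monotonicity of the relevant expressions in $d_{kl}$ and $t_{k}$ together with the intermediate value theorem.
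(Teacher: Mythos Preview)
Your proof is correct and follows essentially the same two-stage strategy as the paper: first inflate the $d_{kl}$'s to turn (\ref{eqn:d1con})--(\ref{eqn:d2con}) into equalities, then shrink the $t_{k}$'s until either (\ref{eqn:d0con}) is tight or the constraints $\sigma_{N_{k}}^{2}e^{-2t_{k}}\le\min\{d_{k1},d_{k2}\}$ saturate. The only cosmetic differences are that the paper moves the variables one coordinate at a time (first $d_{1l}$ then $d_{2l}$; first $t_{1}$ then $t_{2}$), whereas you adjust the pair $(\Delta d_{1l},\Delta d_{2l})$ via a direct box-feasibility argument and the pair $(t_{1},t_{2})$ along a straight segment using the intermediate value theorem; both routes rely on the same standing assumption $D_{l}<\sigma_{S}^{2}$ and the same monotonicity.
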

\begin{proof}
The lemma is proved as follows. Consider $\bar{p}\in\mathcal{F}$. Then we increase $d_{11}$ and $d_{12}$ by $\Delta d_{11}$ and $\Delta d_{12}$ until we meet the distortion constraints at individual receivers with equality or $d_{1l}+\Delta d_{1l}=\sigma_{N_{1}}^{2}$, $l=1,2$. In the former case, we satisfy the individual distortion constraints with equality. In the latter case, we now increase $d_{21}$ and $d_{22}$ by $\Delta d_{21}$ and $\Delta d_{22}$ until we meet the individual distortion constraints with equality. We will be able to find such $\Delta d_{21}$ and $\Delta d_{22}$ satisfying $d_{2l}+\Delta d_{2l}\leq\sigma_{N_{2}}^{2}$, $l=1,2$, since $D_{l}<\sigma_{S}^{2}$ for $l=1,2$. Now, we decrease $t_{1}$ by $\Delta t_{1}$ until the central distortion constraint is met with equality or $\sigma_{N_{1}}^{2}e^{-2(t_{1}-\Delta t_{1})}=\min\{d_{11}+\Delta d_{11},d_{12}+\Delta d_{12}\}$. In the former case, we satisfy the central distortion with equality. In the latter case, we decrease $t_{2}$ by $\Delta t_{2}$ until the central distortion constraint is met with equality or $\sigma_{N_{2}}^{2}e^{-2(t_{2}-\Delta t_{2})}=\min\{d_{21}+\Delta d_{21},d_{22}+\Delta d_{22}\}$. Therefore $\forall \bar{p}\in\mathcal{F}$, $\Delta\mathcal{F}_{\bar{p}}\neq \phi$. 
\end{proof}
Let $k=1,2$ and $\sigma_{Z}^{2}\geq 0$. Define, for $(d_{1},d_{2},t)\in\mathcal{F}_{k}$, 
\begin{equation}\label{eqn:rparam}
r_{k}(d_{1},d_{2},t,\sigma_{Z}^{2}) = t+\frac{1}{2}\log\frac{(\sigma_{N_{k}}^{2}+\sigma_{Z}^{2})}{(d_{1}+\sigma_{Z}^{2})(d_{2}+\sigma_{Z}^{2})} + \frac{1}{2}\log(\sigma_{N_{k}}^{2}e^{-2t}+\sigma_{Z}^{2}). 
\end{equation}
We now state the main result of this section. 

\begin{lemma}\label{thm:lb}
The sum rate of the vacationing-CEO problem is lower bounded by 
\begin{align}
\inf_{\bar{p}\in\mathcal{P}} \sup_{\sigma_{Z_{1}},\sigma_{Z_{2}}\in\mathbb{R}}r_{1}(d_{11},d_{12},t_{1},\sigma_{Z_{1}}^{2}) + r_{2}(d_{21},d_{22},t_{2},\sigma_{Z_{2}}^{2})+\frac{1}{2}\log\frac{\sigma_{S}^{4}}{D_{1}D_{2}}.
\label{eqn:lb}
\end{align} 
\end{lemma}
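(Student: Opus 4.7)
The plan combines the CEO converse of Wang \cite{Wang2009} with the multiple description converse of Ozarow \cite{Ozarow1980} and Wang \cite{Wang2007}, as indicated in the introduction. I would begin by applying the basic cardinality bound and grouping messages by encoder:
\begin{equation*}
n \sum_{k,l} R_{kl} \geq H(C_{11}, C_{12}) + H(C_{21}, C_{22}).
\end{equation*}
This grouping is natural because the parameters $t_k, d_{k1}, d_{k2}$ appearing in $r_k$ are all associated with encoder $k$.

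For each encoder $k$, I would introduce an auxiliary i.i.d.\ Gaussian vector $Z_k^n \sim \mathcal{N}(0, \sigma_{Z_k}^2 I_n)$, independent of all other random variables. Data processing along the Markov chain $(C_{k1}, C_{k2}) - X_k^n - (X_k^n + Z_k^n)$ gives $H(C_{k1}, C_{k2}) \geq I(X_k^n + Z_k^n; C_{k1}, C_{k2})$. Expanding this mutual information through $S^n$, applying the conditional entropy power inequality, and using the Jensen-type bound $h(X_k^n + Z_k^n \mid C_{kl}, S^n) \leq \frac{n}{2}\log 2\pi e (d_{kl} + \sigma_{Z_k}^2)$ should assemble the $r_k(d_{k1}, d_{k2}, t_k, \sigma_{Z_k}^2)$ form. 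Heuristically, the $t_k$ term and the factor $\sigma_{N_k}^2 e^{-2t_k} + \sigma_{Z_k}^2$ arise from the EPI step applied conditionally on $S^n$ using the identity $I(X_k^n; C_{k1}, C_{k2}\mid S^n) = n t_k$, analogous to Wang's CEO argument, while the $(d_{kl}+\sigma_{Z_k}^2)$ factors come from the MD-type Jensen bounds applied to each description.

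To recover the $\frac{1}{2}\log(\sigma_S^4/(D_1 D_2))$ term, I would invoke the per-receiver rate-distortion inequalities $I(S^n; C_{1l}, C_{2l}) \geq \frac{n}{2}\log(\sigma_S^2/D_l)$ for $l=1,2$, and exploit the conditional independence $(C_{11}, C_{12}) \perp (C_{21}, C_{22}) \mid S^n$ implied by the Markov chain to redistribute the per-encoder source information $I(S^n; \mathbf{C}_{k\cdot})$ (which arises naturally in the expansion of $H(C_{k1}, C_{k2})$) into the per-receiver contributions needed. The resulting inequality initially holds for the $\bar{p} \in \mathcal{F}$ associated with the code; since $r_k$ is decreasing in $d_{kl}$ and increasing in $t_k$, Lemma \ref{thm:deltarem} lets us replace $\bar{p}$ by some $\bar{p}' \in \mathcal{P}$ at which the bound is only weaker, so $\inf_{\bar{p} \in \mathcal{P}}$ remains a valid lower bound. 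Because $\sigma_{Z_k}$ was introduced freely, the inequality holds for every choice, and so also for the tightest one, giving the $\sup$ in \eqref{eqn:lb}.

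The main obstacle is the combined CEO+MD EPI argument that extracts $r_k$ from $H(C_{k1}, C_{k2})$: one must chain EPI applications to simultaneously produce the factor $\sigma_{N_k}^2 e^{-2t_k} + \sigma_{Z_k}^2$ (CEO quantization, through $t_k$) and the product $(d_{k1}+\sigma_{Z_k}^2)(d_{k2}+\sigma_{Z_k}^2)$ (MD, through both descriptions), and to arrange them so they match the precise closed form of $r_k$. A secondary difficulty is reconciling the per-encoder contributions with the per-receiver distortion constraints without introducing slack in the Shannon-type inequalities; this is delicate because $(C_{k1}, C_{k2})$ are available only to encoder $k$, while the distortion constraints pertain to $(C_{1l}, C_{2l})$, which spans both encoders.
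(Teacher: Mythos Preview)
Your proposal contains a genuine structural gap at the very first step. You group the messages by \emph{encoder}, writing
\[
n\sum_{k,l}R_{kl}\ \ge\ H(C_{11},C_{12})+H(C_{21},C_{22}),
\]
whereas the paper groups by \emph{receiver} (time slot):
\[
n\sum_{k,l}R_{kl}\ \ge\ H(C_{11},C_{21})+H(C_{12},C_{22}).
\]
The receiver grouping is what makes the Ozarow/Wang--Viswanath MD argument applicable: it immediately produces the decomposition
\[
I(X_1^n,X_2^n;C_{11},C_{12},C_{21},C_{22})+I(C_{11},C_{21};C_{12},C_{22}),
\]
and it is the second term, the mutual information \emph{between the two time slots}, on which the auxiliary channels $Y_k=X_k+Z_k$ and the EPI are brought to bear. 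The per-receiver rate--distortion bounds $I(S^n;C_{1l},C_{2l})\ge \tfrac{n}{2}\log(\sigma_S^2/D_l)$ then enter naturally, because $(C_{1l},C_{2l})$ is exactly what receiver $l$ sees.

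With your encoder grouping, none of this is available. Since $C_{k1},C_{k2}$ are both functions of $X_k^n$, one has $H(C_{k1},C_{k2})=I(S^n;C_{k1},C_{k2})+n t_k$, and the only ``MD'' object left is $I(C_{11},C_{12};C_{21},C_{22})$, which carries no useful distortion information. Your proposed redistribution, using $(C_{11},C_{12})\perp (C_{21},C_{22})\mid S^n$ to convert the per-encoder terms $I(S^n;C_{k1},C_{k2})$ into per-receiver terms $I(S^n;C_{1l},C_{2l})$, does not go through: conditional independence given $S^n$ gives $H(C_{1l},C_{2l}\mid S^n)=H(C_{1l}\mid S^n)+H(C_{2l}\mid S^n)$, but there is no corresponding relation for the unconditional entropies, so the needed inequality $\sum_k I(S^n;C_{k1},C_{k2})\ge \sum_l I(S^n;C_{1l},C_{2l})$ simply fails in general. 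What you label a ``secondary difficulty'' is in fact the primary obstruction, and it cannot be patched without essentially reverting to the paper's receiver grouping. Your final step (monotonicity of $r_k$ together with Lemma~\ref{thm:deltarem} to pass from $\mathcal{F}$ to $\mathcal{P}$, then $\sup$ over $\sigma_{Z_k}$) is correct and matches the paper.
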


\begin{proof}
By procedural steps, we have  
\begin{align}
n(R_{11}+R_{21}+R_{12}+R_{22}) \geq& H(C_{11},C_{21})+H(C_{12},C_{22})\nonumber\\
\geq& H(C_{11},C_{21})+H(C_{12},C_{22})-H(C_{11},C_{21},C_{12},C_{22})\nonumber\\
& +H(C_{11},C_{21},C_{12},C_{22})-H(C_{11},C_{21},C_{12},C_{22}|X_{1}^{n},X_{2}^{n})\nonumber\\
=& I(X_{1}^{n},X_{2}^{n};C_{11},C_{21},C_{12},C_{22})+I(C_{11},C_{21};C_{12},C_{22})\nonumber\\
\overset{(a)}{=}& I(S^{n};C_{11},C_{21},C_{12},C_{22})+I(X_{1}^{n},X_{2}^{n};C_{11},C_{21},C_{12},C_{22}|S^{n})\nonumber\\
 & +I(C_{11},C_{21};C_{12},C_{22})\nonumber\\
\overset{(b)}{=}&  I(S^{n};C_{11},C_{21},C_{12},C_{22}) + I(X_{1}^{n};C_{11},C_{12}|S^{n})+I(X_{2}^{n};C_{21},C_{22}|S^{n})\nonumber\\
  &+I(C_{11},C_{21};C_{12},C_{22}), \label{eqn:step0}
\end{align}
where (a) is true since $S^{n}-(X_{1}^{n},X_{2}^{n})-(C_{11},C_{12},C_{21},C_{22})$ and (b) is true since $(C_{11},C_{12})-X_{1}^{n}-S^{n}-X_{2}^{n}-(C_{21},C_{22})$. \\
Let $Y_{1i}=X_{1i}+Z_{1i}$ and $Y_{2i}=X_{2i}+Z_{2i}$,  where $Z_{1i}$ and $Z_{2i}$ are i.i.d Gaussians with mean zero and variance $\sigma_{Z_{1}}^{2}$ and $\sigma_{Z_{2}}^{2}$ for $i\in\{1,2,\ldots,n\}$. Also, $Z_{1i}$ and $Z_{2i}$ are independent of $S_{i}$, $X_{1i}$ and $X_{2i}$. Now,
\begin{align}
I(C_{11},C_{21};C_{12},C_{22}) =& H(C_{11},C_{21})+H(C_{12},C_{22})-H(C_{11},C_{21},C_{12},C_{22})\nonumber\\
=& H(C_{11},C_{21})+H(C_{12},C_{22})-H(C_{11},C_{21},C_{12},C_{22})\nonumber\\
& -H(C_{11},C_{21}|S^{n},Y_{1}^{n},Y_{2}^{n})-H(C_{12},C_{22}|S^{n},Y_{1}^{n},Y_{2}^{n})\nonumber\\
& +H(C_{11},C_{21},C_{12},C_{22}|S^{n},Y_{1}^{n},Y_{2}^{n}) + I(C_{11},C_{21};C_{12},C_{22}|Y_{1}^{n},Y_{2}^{n},S^{n})\nonumber\\
=& I(S^{n},Y_{1}^{n},Y_{2}^{n};C_{11},C_{21})+I(S^{n},Y_{1}^{n},Y_{2}^{n};C_{12},C_{22})\nonumber\\
& -I(S^{n},Y_{1}^{n},Y_{2}^{n};C_{11},C_{12},C_{21},C_{22})+I(C_{11},C_{21};C_{12},C_{22}|Y_{1}^{n},Y_{2}^{n},S^{n})\nonumber\\
\geq & I(S^{n},Y_{1}^{n},Y_{2}^{n};C_{11},C_{21})+I(S^{n},Y_{1}^{n},Y_{2}^{n};C_{12},C_{22})\nonumber\\
& -I(S^{n},Y_{1}^{n},Y_{2}^{n};C_{11},C_{12},C_{21},C_{22}). \label{eqn:step1}
\end{align}
For $l=1,2$,
\begin{equation*}
I(S^{n},Y_{1}^{n},Y_{2}^{n};C_{1l},C_{2l}) = I(S^{n};C_{1l},C_{2l})+I(Y_{1}^{n};C_{1l}|S^{n})+I(Y_{2}^{n};C_{2l}|S^{n})
\end{equation*}
since $(Y_{1}^{n},C_{1l})-S^{n}-(Y_{2}^{n},C_{2l})$. By the definition of the rate distortion function for Gaussian random variables, $I(S^{n};C_{1l},C_{2l})\geq\frac{n}{2}\log\frac{\sigma_{S}^{2}}{D_{l}}$ and $I(Y_{k}^{n};C_{kl}|S^{n})\geq\frac{n}{2}\log\frac{\sigma_{N_{k}}^{2}+\sigma_{Z_{k}}^{2}}{d_{kl}+\sigma_{Z_{k}}^{2}}$ for $k=1,2$. Therefore, 
\begin{equation}\label{eqn:step2}
I(S^{n},Y_{1}^{n},Y_{2}^{n};C_{1l},C_{2l}) \geq \frac{n}{2}\log \frac{\sigma_{S}^{2}(\sigma_{N_{1}}^{2}+\sigma_{Z_{1}}^{2})(\sigma_{N_{2}}^{2}+\sigma_{Z_{2}}^{2})}{D_{l}(d_{1l}+\sigma_{Z_{1}}^{2})(d_{2l}+\sigma_{Z_{2}}^{2})}.
\end{equation}
Observe that
\begin{align}
I(S^{n},Y_{1}^{n},Y_{2}^{n};C_{11},C_{12},C_{21},C_{22}) =& I(S^{n};C_{11},C_{21},C_{12},C_{22})+I(Y_{1}^{n},Y_{2}^{n};C_{11},C_{21},C_{12},C_{22}|S^{n})\nonumber\\
=& I(S^{n};C_{11},C_{21},C_{12},C_{22})+I(Y_{1}^{n};C_{11},C_{12}|S^{n})\nonumber\\
&+I(Y_{2}^{n};C_{21},C_{22}|S^{n}), \label{eqn:step3}
\end{align}
where in the last step we used $(Y_{1}^{n},C_{11},C_{12})-S^{n}-(Y_{2}^{n},C_{21},C_{22})$. Further, for $k=1,2$
\begin{align}
I(Y_{k}^{n};C_{k1},C_{k2}|S^{n}) &= -h(Y_{k}^{n}|S^{n},C_{k1},C_{k2})+h(Y_{k}^{n}|S^{n})\nonumber\\
&\overset{(c)}{\leq} -\frac{n}{2}\log(e^{\frac{2}{n}h(X_{k}^{n}|S^{n},C_{k1},C_{k2})}+e^{\frac{2}{n}h(Z_{k}^{n})})+h(Y_{k}^{n}|S^{n})\nonumber\\
&= -\frac{n}{2}\log(e^{\frac{2}{n}(h(X_{k}^{n}|S^{n})-I(X_{k}^{n};C_{k1},C_{k2}|S^{n}))}+e^{\frac{2}{n}h(Z_{k}^{n})})+h(Y_{k}^{n}|S^{n})\nonumber\\
&= -\frac{n}{2}\log(\sigma_{N_{k}}^{2}e^{-2t_{k}}+\sigma_{Z_{k}}^{2})+\frac{n}{2}\log(\sigma_{N_{k}}^{2}+\sigma_{Z_{k}}^{2}), \label{eqn:step4}
\end{align}
where (c) follows from entropy power inequality (EPI). From (\ref{eqn:step1}), (\ref{eqn:step2}), (\ref{eqn:step3}) and (\ref{eqn:step4}),
\begin{align*}
I(C_{11},C_{21};C_{12},C_{22}) \geq & \frac{n}{2}\log\frac{(\sigma_{N_{1}}^{2}+\sigma_{Z_{1}}^{2})(\sigma_{N_{2}}^{2}+\sigma_{Z_{2}}^{2})\sigma_{S}^{4}}{(d_{12}+\sigma_{Z_{1}}^{2})(d_{22}+\sigma_{Z_{2}}^{2})(d_{11}+\sigma_{Z_{1}}^{2})(d_{21}+\sigma_{Z_{2}}^{2})D_{1}D_{2}}\\
& + \frac{n}{2}\log(\sigma_{N_{1}}^{2}e^{-2t_{1}}+\sigma_{Z_{1}}^{2})(\sigma_{N_{2}}^{2}e^{-2t_{2}}+\sigma_{Z_{2}}^{2})-I(S^{n};C_{11},C_{21},C_{12},C_{22})
\end{align*}
Substituting the above in (\ref{eqn:step0}), we get
\begin{align}
R_{11}+R_{21}+R_{12}+R_{22} \geq & t_{1}+t_{2}+\frac{1}{2}\log\frac{(\sigma_{N_{1}}^{2}+\sigma_{Z_{1}}^{2})(\sigma_{N_{2}}^{2}+\sigma_{Z_{2}}^{2})\sigma_{S}^{4}}{(d_{12}+\sigma_{Z_{1}}^{2})(d_{22}+\sigma_{Z_{2}}^{2})(d_{11}+\sigma_{Z_{1}}^{2})(d_{21}+\sigma_{Z_{2}}^{2})D_{1}D_{2}}\nonumber\\
& +\frac{1}{2}\log(\sigma_{N_{1}}^{2}e^{-2t_{1}}+\sigma_{Z_{1}}^{2})(\sigma_{N_{2}}^{2}e^{-2t_{2}}+\sigma_{Z_{2}}^{2})\nonumber\\
= & r_{1}(d_{11},d_{12},t_{1},\sigma_{Z_{1}}^{2})+r_{2}(d_{21},d_{22},t_{2},\sigma_{Z_{2}}^{2})+\frac{1}{2}\log\frac{\sigma_{S}^{4}}{D_{1}D_{2}}, \label{eqn:pmset}
\end{align}
where the last equality is due to the definition in (\ref{eqn:rparam}). 
From Lemma \ref{thm:lemparam}, we have $\bar{p}\in\mathcal{F}$. By Lemma \ref{thm:deltarem}, $\Delta\mathcal{F}_{\bar{p}}\neq\phi$. Let $\Delta\bar{p}\in\Delta\mathcal{F}_{\bar{p}}$. Note that $r_{k}(d_{k1},d_{k2},t_{k},\sigma_{Z_{k}}^{2})$ is decreasing in $d_{k1}$ and $d_{k2}$ and increasing in $t_{k}$ for $k=1,2$. This implies that  
\begin{equation*}
r_{k}(d_{k1},d_{k2},t_{k},\sigma_{Z_{k}}^{2}) \geq r_{k}(d_{k1}+\Delta d_{k1},d_{k2}+\Delta d_{k2},t_{k}-\Delta t_{k},\sigma_{Z_{k}}^{2}) \quad \forall \bar{p}\in\mathcal{F}.
\end{equation*}
Therefore,
\begin{align*}
R_{11}+R_{21}+R_{12}+R_{22} \geq &r_{1}(d_{11}+\Delta d_{11},d_{12}+\Delta d_{12},t_{1}-\Delta t_{1},\sigma_{Z_{1}}^{2})\\
&+r_{2}(d_{21}+\Delta d_{21},d_{22}+\Delta d_{22},t_{2}-\Delta t_{2},\sigma_{Z_{2}}^{2})+\frac{1}{2}\log\frac{\sigma_{S}^{4}}{D_{1}D_{2}}.
\end{align*}
By definition, $\bar{p}+\Delta\bar{p}\in\mathcal{P}$. Therefore,
\begin{align*}
R_{11}+R_{21}+R_{12}+R_{22} \geq \inf_{\bar{p}\in\mathcal{P}} \sup_{\sigma_{Z_{1}},\sigma_{Z_{2}}\in\mathbb{R}} r_{1}(d_{11},d_{12},t_{1},\sigma_{Z_{1}}^{2})+r_{2}(d_{21},d_{22},t_{2},\sigma_{Z_{2}}^{2})+\frac{1}{2}\log\frac{\sigma_{S}^{4}}{D_{1}D_{2}}. 
\end{align*}
\end{proof}
In the following section, we show that the lower bound on the sum rate described above is achieved by the Gaussian scheme. 

%%%%%%%%%%%%%%%%%%%%%%%%%%%%%%%%%%%%%%%%%%%%%%%%%%%%%%%%%%%%%%%%%%%%%%%%%%%%%%%%%%%%%%%%%%%%%%%%%%%%%%%%%%%%%%%%%%%%%%%%%%%%%%
\section{Equivalence of achievable sum rate and lower bound}\label{sec:comp}
Before we compare sum rate of the achievable scheme with the lower bound, we present two lemmas about parameters introduced in the previous section which will be used in the comparison. We will use the notation $\bar{p}=(d_{11},d_{12},d_{21},d_{22},t_{1},t_{2})$. 

\begin{lemma}\label{thm:paramlemma}
If $\frac{1}{D_{1}}+\frac{1}{D_{2}}-\max\{\frac{1}{\sigma_{N_{1}}^{2}},\frac{1}{\sigma_{N_{2}}^{2}}\}-\frac{1}{\sigma_{S}^{2}}\geq \frac{1}{D_{0}}$ and $\bar{p}\in\mathcal{P}$, then
\begin{equation*}
d_{k1}+d_{k2}-\sigma_{N_{k}}^{2}e^{-2t_{k}}-\sigma_{N_{k}}^{2} \leq 0
\end{equation*}
for $k=1,2$. 
\end{lemma}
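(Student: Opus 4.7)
The plan is to partition the proof along the disjunction $\mathcal{P}=\mathcal{P}_{1}\cup\mathcal{P}_{2}$ built into the definition of $\mathcal{P}$, because the two cases are handled by completely different mechanisms. I fix $k\in\{1,2\}$ and aim to show that $d_{k1}+d_{k2}\leq \sigma_{N_{k}}^{2}(1+e^{-2t_{k}})$; equivalently, dividing by $\sigma_{N_{k}}^{4}$, that $\frac{d_{k1}+d_{k2}}{\sigma_{N_{k}}^{4}}\leq \frac{1+e^{-2t_{k}}}{\sigma_{N_{k}}^{2}}$.

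The case $\bar{p}\in\mathcal{P}_{2}$ is essentially trivial. There one has the extra identity $\sigma_{N_{k}}^{2}e^{-2t_{k}}=\min\{d_{k1},d_{k2}\}$, so $d_{k1}+d_{k2}=\sigma_{N_{k}}^{2}e^{-2t_{k}}+\max\{d_{k1},d_{k2}\}$, and the defining property $\max\{d_{k1},d_{k2}\}\leq \sigma_{N_{k}}^{2}$ of $\mathcal{F}_{k}$ delivers the claim immediately. So the substance of the lemma lives in $\mathcal{P}_{1}$.

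For $\bar{p}\in\mathcal{P}_{1}$, the strategy is to add the two individual-distortion equalities (\ref{eqn:d1eq}) and (\ref{eqn:d2eq}) to solve for $\frac{d_{k1}+d_{k2}}{\sigma_{N_{k}}^{4}}$ in terms of $\frac{1}{D_{1}}+\frac{1}{D_{2}}$ and the \emph{other} index's pair $d_{3-k,1},d_{3-k,2}$, then plug in the distortion hypothesis $\frac{1}{D_{1}}+\frac{1}{D_{2}}-\max\{\frac{1}{\sigma_{N_{1}}^{2}},\frac{1}{\sigma_{N_{2}}^{2}}\}-\frac{1}{\sigma_{S}^{2}}\geq \frac{1}{D_{0}}$, then substitute the central-distortion equality (\ref{eqn:d0eq}) to cancel $1/\sigma_{S}^{2}$ and introduce $e^{-2t_{1}}$ and $e^{-2t_{2}}$. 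After these substitutions the desired inequality reduces to
\begin{equation*}
\frac{1+e^{-2t_{3-k}}}{\sigma_{N_{3-k}}^{2}}-\frac{d_{3-k,1}+d_{3-k,2}}{\sigma_{N_{3-k}}^{4}}\leq \max\Bigl\{\tfrac{1}{\sigma_{N_{1}}^{2}},\tfrac{1}{\sigma_{N_{2}}^{2}}\Bigr\}.
\end{equation*}
This is where the one nontrivial ingredient enters: from $(d_{3-k,1},d_{3-k,2},t_{3-k})\in\mathcal{F}_{3-k}$ one has $\sigma_{N_{3-k}}^{2}e^{-2t_{3-k}}\leq \min\{d_{3-k,1},d_{3-k,2}\}$, hence $d_{3-k,1}+d_{3-k,2}\geq 2\sigma_{N_{3-k}}^{2}e^{-2t_{3-k}}$. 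This bound is precisely what is needed to cancel the $e^{-2t_{3-k}}$ on the left, leaving $\frac{1-e^{-2t_{3-k}}}{\sigma_{N_{3-k}}^{2}}\leq \frac{1}{\sigma_{N_{3-k}}^{2}}\leq \max\{\frac{1}{\sigma_{N_{1}}^{2}},\frac{1}{\sigma_{N_{2}}^{2}}\}$, closing the chain.

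The main (and only real) obstacle is spotting the right application of the $\mathcal{F}_{k}$ constraints: one must lower-bound $d_{3-k,1}+d_{3-k,2}$ using $\sigma_{N_{3-k}}^{2}e^{-2t_{3-k}}\leq \min\{d_{3-k,1},d_{3-k,2}\}$ at exactly the right moment, after both the distortion hypothesis and the central-distortion equality have been fed in. Without that, one is left with an unwieldy expression in the ``other'' coordinate, and the cancellation with the $\max$ term is invisible. Beyond this insight, everything is bookkeeping with the three equalities defining $\mathcal{P}_{1}$.
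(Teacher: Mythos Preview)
Your proof is correct and follows essentially the same approach as the paper: both split on $\mathcal{P}_{1}$ versus $\mathcal{P}_{2}$, dispose of $\mathcal{P}_{2}$ trivially via $\sigma_{N_{k}}^{2}e^{-2t_{k}}=\min\{d_{k1},d_{k2}\}$ and $\max\{d_{k1},d_{k2}\}\le\sigma_{N_{k}}^{2}$, and for $\mathcal{P}_{1}$ combine the three equalities (\ref{eqn:d1eq})--(\ref{eqn:d0eq}) with the distortion hypothesis and then close using the $\mathcal{F}_{k}$ constraint $\sigma_{N_{k}}^{2}e^{-2t_{k}}\le\min\{d_{k1},d_{k2}\}$. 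The only difference is organizational: the paper first writes the combined inequality $\tfrac{A_{1}}{\sigma_{N_{1}}^{4}}+\tfrac{A_{2}}{\sigma_{N_{2}}^{4}}+\max\{\tfrac{1}{\sigma_{N_{1}}^{2}},\tfrac{1}{\sigma_{N_{2}}^{2}}\}\le 0$ (with $A_{k}=d_{k1}+d_{k2}-\sigma_{N_{k}}^{2}-\sigma_{N_{k}}^{2}e^{-2t_{k}}$) and then argues by cases that both $A_{k}\le 0$, whereas you fix $k$ and bound directly via the other index; your route avoids the paper's case split and is slightly more streamlined (incidentally, the weaker bound $d_{3-k,1}+d_{3-k,2}\ge\sigma_{N_{3-k}}^{2}e^{-2t_{3-k}}$ already suffices in your last step).
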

\begin{proof}
Let $\bar{p}\in\mathcal{P}_{1}$. Since 
\begin{equation*}
\frac{1}{D_{1}}+\frac{1}{D_{2}}-\max\{\frac{1}{\sigma_{N_{1}}^{2}},\frac{1}{\sigma_{N_{2}}^{2}}\}-\frac{1}{\sigma_{S}^{2}}\geq \frac{1}{D_{0}},
\end{equation*}
substituting for $\frac{1}{D_{1}}$, $\frac{1}{D_{2}}$ and $\frac{1}{D_{0}}$, from (\ref{eqn:d1eq}), (\ref{eqn:d2eq}) and (\ref{eqn:d0eq}) respectively, we get 
\begin{equation*}
\frac{d_{11}+d_{12}-\sigma_{N_{1}}^{2}-\sigma_{N_{1}}^{2}e^{-2t_{1}}}{\sigma_{N_{1}}^{4}}+\frac{d_{21}+d_{22}-\sigma_{N_{2}}^{2}-\sigma_{N_{2}}^{2}e^{-2t_{2}}}{\sigma_{N_{2}}^{4}}+\max\{\frac{1}{\sigma_{N_{1}}^{2}},\frac{1}{\sigma_{N_{2}}^{2}}\}\leq 0.
\end{equation*}
Therefore either $d_{11}+d_{12}-\sigma_{N_{1}}^{2}-\sigma_{N_{1}}^{2}e^{-2t_{1}}\leq 0$ or $d_{21}+d_{22}-\sigma_{N_{2}}^{2}-\sigma_{N_{2}}^{2}e^{-2t_{2}}\leq 0$. Let $d_{11}+d_{12}-\sigma_{N_{1}}^{2}-\sigma_{N_{1}}^{2}e^{-2t_{1}}\leq 0$. But since  
\begin{align*}
&\frac{d_{11}+d_{12}-\sigma_{N_{1}}^{2}-\sigma_{N_{1}}^{2}e^{-2t_{1}}}{\sigma_{N_{1}}^{4}}+\frac{d_{21}+d_{22}-\sigma_{N_{2}}^{2}-\sigma_{N_{2}}^{2}e^{-2t_{2}}}{\sigma_{N_{2}}^{4}}+\frac{1}{\sigma_{N_{1}}^{2}}\leq 0\\
\Rightarrow&\frac{d_{11}+d_{12}-\sigma_{N_{1}}^{2}e^{-2t_{1}}}{\sigma_{N_{1}}^{4}}+\frac{d_{21}+d_{22}-\sigma_{N_{2}}^{2}-\sigma_{N_{2}}^{2}e^{-2t_{2}}}{\sigma_{N_{2}}^{4}} \leq 0,
\end{align*}
and $\sigma_{N_{1}}^{2}e^{-2t_{1}}\leq \min\{d_{11},d_{12}\}$, it follows that $d_{21}+d_{22}-\sigma_{N_{2}}^{2}-\sigma_{N_{2}}^{2}e^{-2t_{2}}\leq0$. Similarly, we can start with $d_{21}+d_{22}-\sigma_{N_{2}}^{2}-\sigma_{N_{2}}^{2}e^{-2t_{2}}\leq 0$, and use 
\begin{equation*}
\frac{d_{11}+d_{12}-\sigma_{N_{1}}^{2}-\sigma_{N_{1}}^{2}e^{-2t_{1}}}{\sigma_{N_{1}}^{4}}+\frac{d_{21}+d_{22}-\sigma_{N_{2}}^{2}-\sigma_{N_{2}}^{2}e^{-2t_{2}}}{\sigma_{N_{2}}^{4}}+\frac{1}{\sigma_{N_{2}}^{2}}\leq 0,
\end{equation*}
to show that $d_{11}+d_{12}-\sigma_{N_{1}}^{2}-\sigma_{N_{1}}^{2}e^{-2t_{1}}\leq 0$. Therefore we have now shown that if $\bar{p}\in\mathcal{P}_{1}$, then $d_{k1}+d_{k2}-\sigma_{N_{k}}^{2}e^{-2t_{k}}-\sigma_{N_{k}}^{2} \leq 0$ for $k=1,2$. 

Now, let $\bar{p}\in\mathcal{P}_{2}$. Therefore, $\sigma_{N_{k}}^{2}e^{-2t_{k}}=\min\{d_{k1},d_{k2}\}$, $k=1,2$. Since $\max\{d_{k1},d_{k2}\}\leq\sigma_{N_{k}}^{2}$, it follows that 
\begin{align*}
d_{k1}+d_{k2}-\sigma_{N_{k}}^{2}-\sigma_{N_{k}}^{2}e^{-2t_{k}}&= \min\{d_{k1},d_{k2}\}+\max\{d_{k1},d_{k2}\}-\sigma_{N_{k}}^{2}-\sigma_{N_{k}}^{2}e^{-2t_{k}}\\
&= \max\{d_{k1},d_{k2}\}-\sigma_{N_{k}}^{2}\\
&\leq 0. 
\end{align*}
Thus for all $\bar{p}\in\mathcal{P}$, $d_{k1}+d_{k2}-\sigma_{N_{k}}^{2}-\sigma_{N_{k}}^{2}e^{-2t_{k}}\leq 0$, $k=1,2$. 
\end{proof}

We now state and prove the second lemma about the parameters. Let $(d_{k1},d_{k2},t_{k})\in\mathcal{F}_{k}$ for $k=1,2$. Define
\begin{equation}\label{eqn:alpha}
\alpha_{k0} = \frac{\sigma_{N_{k}}^{4}e^{-2t_{k}}}{\sigma_{N_{k}}^{2}-\sigma_{N_{k}}^{2}e^{-2t_{k}}} \quad \alpha_{k1} = \frac{\sigma_{N_{k}}^{2}d_{k1}}{\sigma_{N_{k}}^{2}-d_{k1}} \quad 
\alpha_{k2} = \frac{\sigma_{N_{k}}^{2}d_{k2}}{\sigma_{N_{k}}^{2}-d_{k2}}
\end{equation}
and
\begin{equation*}
g_{k}(\beta) = \frac{1}{\alpha_{k0}+\beta}-\frac{1}{\alpha_{k1}+\beta}-\frac{1}{\alpha_{k2}+\beta}. 
\end{equation*}
We use this function to partition the space of parameters $(d_{k1},d_{k2},t_{k})\in\mathcal{F}_{k}$. Define, 
\begin{align*}
\mathcal{F}_{k1} =& \{(d_{k1},d_{k2},t_{k})\in \mathcal{F}_{k}: g_{k}(0)>0 \textrm{ and } g_{k}(\sigma_{N_{k}}^{2})\leq 0\}  \\
\mathcal{F}_{k2} =& \{(d_{k1},d_{k2},t_{k})\in \mathcal{F}_{k}: g_{k}(0)\leq 0\}\\
\mathcal{F}_{k3} =& \{(d_{k1},d_{k2},t_{k})\in \mathcal{F}_{k}: g_{k}(\sigma_{N_{k}}^{2}) > 0\}.
\end{align*}
\begin{lemma}\label{thm:paramlem}
For $k=1,2$,
\begin{equation*}
\mathcal{F}_{k} = \mathcal{F}_{k1}\cup \mathcal{F}_{k2} \cup \mathcal{F}_{k3}.
\end{equation*}
Moreover, if $\frac{1}{D_{1}}+\frac{1}{D_{2}}-\max\{\frac{1}{\sigma_{N_{1}}^{2}},\frac{1}{\sigma_{N_{2}}^{2}}\}-\frac{1}{\sigma_{S}^{2}}\geq \frac{1}{D_{0}}$, then 
\begin{equation*}
\bar{p}\in\mathcal{P} \Rightarrow (d_{k1},d_{k2},t_{k}) \in \mathcal{F}_{k1}\cup \mathcal{F}_{k2}, k=1,2.
 \end{equation*}
\end{lemma}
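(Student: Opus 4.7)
The plan is to dispatch the first assertion by a simple case split on the signs of $g_k(0)$ and $g_k(\sigma_{N_k}^2)$, and then to reduce the second assertion to Lemma \ref{thm:paramlemma} via a direct evaluation of $g_k$ at $\beta=\sigma_{N_k}^2$.

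First, for the decomposition $\mathcal{F}_k=\mathcal{F}_{k1}\cup\mathcal{F}_{k2}\cup\mathcal{F}_{k3}$, I would simply observe that the three defining conditions exhaust the sign possibilities. Fix $(d_{k1},d_{k2},t_k)\in\mathcal{F}_k$. If $g_k(0)\leq 0$, the triple lies in $\mathcal{F}_{k2}$. Otherwise $g_k(0)>0$, and then either $g_k(\sigma_{N_k}^2)>0$ (so the triple is in $\mathcal{F}_{k3}$) or $g_k(\sigma_{N_k}^2)\leq 0$ (so the triple is in $\mathcal{F}_{k1}$). The only caveat is verifying that $\alpha_{k0},\alpha_{k1},\alpha_{k2}$ are well-defined and positive for $(d_{k1},d_{k2},t_k)\in\mathcal{F}_k$; this follows from $d_{kl}\leq\sigma_{N_k}^2$ and $t_k\geq 0$, with the degenerate cases ($t_k=0$ or $d_{kl}=\sigma_{N_k}^2$) handled as limits so that $g_k$ is continuous in the appropriate extended sense.

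The second assertion is the substantive part, and the key step is an explicit computation. At $\beta=\sigma_{N_k}^2$, one finds
\begin{equation*}
\alpha_{k0}+\sigma_{N_k}^2=\frac{\sigma_{N_k}^2}{1-e^{-2t_k}},\qquad \alpha_{kl}+\sigma_{N_k}^2=\frac{\sigma_{N_k}^4}{\sigma_{N_k}^2-d_{kl}}\quad(l=1,2),
\end{equation*}
so that after inversion and simplification
\begin{equation*}
g_k(\sigma_{N_k}^2)=\frac{1-e^{-2t_k}}{\sigma_{N_k}^2}-\frac{\sigma_{N_k}^2-d_{k1}}{\sigma_{N_k}^4}-\frac{\sigma_{N_k}^2-d_{k2}}{\sigma_{N_k}^4}=\frac{d_{k1}+d_{k2}-\sigma_{N_k}^2-\sigma_{N_k}^2e^{-2t_k}}{\sigma_{N_k}^4}.
\end{equation*}
Under the hypothesis $\frac{1}{D_1}+\frac{1}{D_2}-\max\{\frac{1}{\sigma_{N_1}^2},\frac{1}{\sigma_{N_2}^2}\}-\frac{1}{\sigma_S^2}\geq \frac{1}{D_0}$ together with $\bar{p}\in\mathcal{P}$, Lemma \ref{thm:paramlemma} gives exactly $d_{k1}+d_{k2}-\sigma_{N_k}^2-\sigma_{N_k}^2e^{-2t_k}\leq 0$, so $g_k(\sigma_{N_k}^2)\leq 0$. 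This rules out membership in $\mathcal{F}_{k3}$, forcing $(d_{k1},d_{k2},t_k)\in\mathcal{F}_{k1}\cup\mathcal{F}_{k2}$ as required.

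The only mild obstacle I anticipate is the algebraic rearrangement leading to the clean expression for $g_k(\sigma_{N_k}^2)$; everything else is either tautological set-theory or a direct citation of the previous lemma. In effect, the whole point of the three-way partition seems to be to isolate $\mathcal{F}_{k3}$ as the ``bad'' regime that the distortion hypothesis precisely excludes, so the proof structure amounts to recognizing that $g_k(\sigma_{N_k}^2)$ is (up to a positive factor $\sigma_{N_k}^{-4}$) the very quantity controlled by Lemma \ref{thm:paramlemma}.
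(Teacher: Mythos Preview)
Your proposal is correct and follows essentially the same approach as the paper: the first assertion is handled by the same sign-case split on $g_k(0)$ and $g_k(\sigma_{N_k}^2)$, and the second assertion is reduced to Lemma~\ref{thm:paramlemma} via the identification of $g_k(\sigma_{N_k}^2)$ with $(d_{k1}+d_{k2}-\sigma_{N_k}^2-\sigma_{N_k}^2e^{-2t_k})/\sigma_{N_k}^4$. Your version is in fact more explicit, since the paper asserts this identification without writing out the algebra you supply.
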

\begin{proof}
%Observe that 
%\begin{align*}
%g_{k}(0) &= \frac{1}{\alpha_{k0}} - \frac{1}{\alpha_{k1}}+\frac{1}{\alpha_{k2}} 
%= \frac{1}{\sigma_{N_{k}}^{2}e^{-2t_{k}}}+\frac{1}{\sigma_{N_{k}}^{2}}-\frac{1}{d_{k1}}-\frac{1}{d_{k2}},\\
%g_{k}(\sigma_{N_{k}}^{2}) &=  \frac{1}{\alpha_{k0}+\sigma_{N_{k}}^{2}}-\frac{1}{\alpha_{k1}+\sigma_{N_{k}}^{2}}+\frac{1}{\alpha_{k2}+\sigma_{N_{k}}^{2}} = \frac{1}{\sigma_{N_{k}}^{4}}\left(d_{k1}+d_{k2}-\sigma_{N_{k}}^{2}e^{-2t_{k}}-\sigma_{N_{k}}^{2}\right).
%\end{align*}
%Also, since 
%\begin{equation*}
%\left(\frac{1}{d_{k1}}+\frac{1}{d_{k2}}-\frac{1}{\sigma_{N_{k}}^{2}}\right)^{-1}\geq d_{k1}+d_{k2}-\sigma_{N_{k}}^{2},
%\end{equation*}
%it follows that
%\begin{align*}
%g_{k}(0)\leq 0 \Rightarrow g_{k}(\sigma_{N_{k}}^{2})\leq 0\\
%g_{k}(\sigma_{N_{k}}^{2})> 0 \Rightarrow g_{k}(0) > 0.
%\end{align*}
For every $(d_{k1},d_{k2},t_{k})\in\mathcal{F}_{k}$, one of either $g_{k}(0)>0$ and $g_{k}(\sigma_{N_{k}}^{2})\leq 0$ or $g_{k}(0)\leq 0$ or $g_{k}(\sigma_{N_{k}}^{2})>0$ is true and therefore $\mathcal{F}_{k} = \mathcal{F}_{k1}\cup \mathcal{F}_{k2} \cup \mathcal{F}_{k3}$. \\
From Lemma \ref{thm:paramlemma}, $\bar{p}\in\mathcal{P}$ implies $d_{k1}+d_{k2}-\sigma_{N_{k}}^{2}e^{-2t_{k}}-\sigma_{N_{k}}^{2} \leq 0$ for $k=1,2$. However, $(d_{k1},d_{k2},t_{k})\in\mathcal{F}_{k3}$ implies $g_{k}(\sigma_{N_{k}}^{2})>0$. This means that $d_{k1}+d_{k2}-\sigma_{N_{k}}^{2}e^{-2t_{k}}-\sigma_{N_{k}}^{2}>0$. Therefore, $\bar{p}\in\mathcal{P}$ implies, $(d_{k1},d_{k2},t_{k})\notin\mathcal{F}_{k3}$. Therefore, 
\begin{equation*}
\bar{p}\in\mathcal{P} \Rightarrow (d_{k1},d_{k2},t_{k}) \in \mathcal{F}_{k1}\cup \mathcal{F}_{k2}, k=1,2.
\end{equation*}
\end{proof}

In order to show that the Gaussian scheme described in Section \ref{sec:ach} achieves the lower bound on the sum rate, we parametrize the achievable sum rate now. Define, 
\begin{align}
d_{k1}' &= \Var(X_{k}|U_{k1},S) = \frac{\sigma_{N_{k}}^{2}\sigma_{W_{k1}}^{2}}{\sigma_{N_{k}}^{2}+\sigma_{W_{k1}}^{2}} \label{eqn:dk1}\\
d_{k2}' &= \Var(X_{k}|U_{k2},S) = \frac{\sigma_{N_{k}}^{2}\sigma_{W_{k2}}^{2}}{\sigma_{N_{k}}^{2}+\sigma_{W_{k2}}^{2}}\label{eqn:dk2}\\
t_{k}' &= I(X_{k};U_{k1},U_{k2}|S) = \frac{1}{2}\log \frac{\sigma_{N_{k}}^{2}(\sigma_{W_{k1}}^{2}+\sigma_{W_{k2}}^{2}+2a_{k})+\sigma_{W_{k1}}^{2}\sigma_{W_{k2}}^{2}-a_{k}^{2}}{\sigma_{W_{k1}}^{2}\sigma_{W_{k2}}^{2}-a_{k}^{2}} \label{eqn:tk}.
\end{align}
We can rewrite the last equation above as 
\begin{equation}\label{eqn:tk1}
\frac{1}{\frac{\sigma_{N_{k}}^{2}e^{-2t_{k}'}}{1-e^{-2t_{k}'}}+a_{k}}=\frac{1}{\sigma_{W_{k1}}^{2}+a_{k}}+\frac{1}{\sigma_{W_{k2}}^{2}+a_{k}}.
\end{equation}
Let $\bar{p}'= (d_{11}',d_{12}',d_{21}',d_{22}',t_{1}',t_{2}')$ denote the parameters achieved by the Gaussian scheme. By definition of $(U_{11},U_{12},U_{21},U_{22})\in\mathcal{U}$, $\bar{p}' \in\mathcal{F}$. This means that the achievable parameters correspond to a Gaussian scheme that satisfies the distortion constraints (\ref{eqn:d1con}), (\ref{eqn:d2con}) and (\ref{eqn:d0con}). We use the definition of functions in (\ref{eqn:rparam}) and the parameters introduced above in the following lemma, relating them to the sum rate achievable by the Gaussian scheme. 

\begin{lemma}\label{thm:paramach}
For all $(U_{11},U_{12},U_{21},U_{22})\in\mathcal{U}$ and $\sigma_{Z_{k}}^{2}\geq 0$, $k\in\{1,2\}$, 
\begin{align}
&I(X_{1},X_{2};U_{11},U_{12},U_{21},U_{22})+I(U_{11},U_{21};U_{12},U_{22}) \nonumber\\
& = r_{1}(d_{11}',d_{12}',t_{1}',\sigma_{Z_{1}}^{2}) + r_{2}(d_{21}',d_{22}',t_{2}',\sigma_{Z_{2}}^{2})
 +I(U_{11};U_{12}|S,Y_{1})+I(U_{21};U_{22}|S,Y_{2})+\frac{1}{2}\log\frac{\sigma_{S}^{4}}{\delta_{1}\delta_{2}}, \label{eqn:ach_param}
\end{align}
where 
\begin{align*}
\frac{1}{\delta_{1}} &= \frac{1}{\sigma_{S}^{2}}+\frac{1}{\sigma_{N_{1}}^{2}}+\frac{1}{\sigma_{N_{2}}^{2}}-\frac{d_{11}'}{\sigma_{N_{1}}^{4}}-\frac{d_{21}'}{\sigma_{N_{2}}^{4}}\\
\frac{1}{\delta_{2}} &=
\frac{1}{\sigma_{S}^{2}}+\frac{1}{\sigma_{N_{1}}^{2}}+\frac{1}{\sigma_{N_{2}}^{2}}-\frac{d_{12}'}{\sigma_{N_{1}}^{4}}-\frac{d_{22}'}{\sigma_{N_{2}}^{4}},
\end{align*}
$Y_{1}=X_{1}+Z_{1}$ and $Y_{2}=X_{2}+Z_{2}$, $Z_{1}$ and $Z_{2}$ are independent of both $X_{1}$ and $X_{2}$ and Gaussian distributed with mean zero and variance $\sigma_{Z_{1}}^{2}$ and $\sigma_{Z_{2}}^{2}$ respectively. 
\end{lemma}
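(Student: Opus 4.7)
The plan is to evaluate each of the two mutual informations on the left-hand side in closed form using the joint Gaussianity of $(S,X_1,X_2,\mathbf{U},Y_1,Y_2)$ and then recognize the resulting expression as the right-hand side. The key conceptual point is that the identity is effectively a restatement of the lower-bound manipulation in Lemma \ref{thm:lb}: where that argument invoked the entropy power inequality and the rate--distortion inequality to produce a lower bound, here the Gaussian structure makes both steps hold with equality, and single-letter instead of $n$-letter versions suffice.

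First I would handle $I(X_1,X_2;\mathbf{U})$ by writing it as $I(S;\mathbf{U})+I(X_1,X_2;\mathbf{U}|S)$ and using the Markov structure $(U_{11},U_{12})-X_1-S-X_2-(U_{21},U_{22})$ to split the second term into $I(X_1;U_{11},U_{12}|S)+I(X_2;U_{21},U_{22}|S)=t_1'+t_2'$. For the second term on the LHS, I introduce $Y_k=X_k+Z_k$ and mimic the decomposition used in (\ref{eqn:step1}), but now as an equality:
\begin{align*}
I(U_{11},U_{21};U_{12},U_{22}) =\, & I(S,Y_1,Y_2;U_{11},U_{21})+I(S,Y_1,Y_2;U_{12},U_{22}) \\
& -I(S,Y_1,Y_2;\mathbf{U})+I(U_{11},U_{21};U_{12},U_{22}|S,Y_1,Y_2).
\end{align*}
Since $(U_{11},U_{12},Y_1)\perp(U_{21},U_{22},Y_2)\mid S$, the last conditional mutual information splits cleanly as $I(U_{11};U_{12}|S,Y_1)+I(U_{21};U_{22}|S,Y_2)$, producing the two ``residual'' terms that appear in (\ref{eqn:ach_param}).

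Next I evaluate each remaining mutual information in closed form. For $l=1,2$, the same conditional independence gives $I(S,Y_1,Y_2;U_{1l},U_{2l})=I(S;U_{1l},U_{2l})+I(Y_1;U_{1l}|S)+I(Y_2;U_{2l}|S)$, and analogously for the four-variable term but with $I(Y_k;U_{k1},U_{k2}|S)$ in place of $I(Y_k;U_{kl}|S)$. The Gaussian MMSE computations then yield $I(S;U_{1l},U_{2l})=\tfrac{1}{2}\log(\sigma_S^2/\delta_l)$ (where one checks $\tfrac{1}{\sigma_{N_k}^2}-\tfrac{d_{kl}'}{\sigma_{N_k}^4}=\tfrac{1}{\sigma_{N_k}^2+\sigma_{W_{kl}}^2}$ to match the stated $\delta_l$), $I(Y_k;U_{kl}|S)=\tfrac{1}{2}\log\frac{\sigma_{N_k}^2+\sigma_{Z_k}^2}{d_{kl}'+\sigma_{Z_k}^2}$, and $I(Y_k;U_{k1},U_{k2}|S)=\tfrac{1}{2}\log\frac{\sigma_{N_k}^2+\sigma_{Z_k}^2}{\sigma_{N_k}^2e^{-2t_k'}+\sigma_{Z_k}^2}$ (this last using $\mathrm{Var}(X_k|U_{k1},U_{k2},S)=\sigma_{N_k}^2e^{-2t_k'}$, which follows from the definition of $t_k'$).

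Assembling everything, the two copies of $I(S;\mathbf{U})$ (one from $I(X_1,X_2;\mathbf{U})$, one from $I(S,Y_1,Y_2;\mathbf{U})$) cancel, and regrouping the $Z_k$-dependent logarithms by index $k$ yields precisely $r_k(d_{k1}',d_{k2}',t_k',\sigma_{Z_k}^2)$ via its definition in (\ref{eqn:rparam}), while the $\delta_l$ pieces combine into $\tfrac{1}{2}\log(\sigma_S^4/(\delta_1\delta_2))$. The remaining obstacle is essentially bookkeeping: carefully tracking that every Gaussian MMSE expression lines up with the factors in $r_k$, and in particular verifying the algebraic identity relating $\delta_l$ to the $d_{kl}'$ through the MMSE formula for estimating $S$ from two conditionally independent noisy observations. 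Once this is done the identity (\ref{eqn:ach_param}) holds term by term, and the lemma follows.
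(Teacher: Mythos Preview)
Your proposal is correct and follows essentially the same route as the paper's own proof: decompose $I(X_1,X_2;\mathbf U)$ via $S$ to extract $t_1'+t_2'$, expand $I(U_{11},U_{21};U_{12},U_{22})$ through $(S,Y_1,Y_2)$ exactly as in (\ref{eqn:step1}) but now with equality, split the residual conditional mutual information using $(U_{11},U_{12},Y_1)\perp(U_{21},U_{22},Y_2)\mid S$, and then evaluate each Gaussian term so that the $I(S;\mathbf U)$ pieces cancel and the rest collapses to the $r_k$ expressions plus $\tfrac12\log(\sigma_S^4/\delta_1\delta_2)$. The only cosmetic difference is that the paper phrases the computation of $I(Y_k;U_{k1},U_{k2}|S)$ via the entropy power inequality holding with equality for Gaussians, whereas you invoke the equivalent MMSE identity $\mathrm{Var}(X_k|U_{k1},U_{k2},S)=\sigma_{N_k}^2e^{-2t_k'}$ directly.
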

This lemma is proved in Appendix \ref{sec:paramach}. We now show that the Gaussian scheme achieves the lower bound on the sum rate corresponding to every point $\bar{p}\in\mathcal{P}$. We prove this through the following lemma. 

\begin{lemma}\label{thm:equivalence}
For every $\bar{p}\in\mathcal{P}$, there exists an achievable $\bar{p}'\in\mathcal{F}$ and $\sigma_{Z_{k}}^{2}\geq 0$, $k=1,2$, such that 
\begin{align*}
&r_{1}(d_{11},d_{12},t_{1},\sigma_{Z_{1}}^{2})+r_{2}(d_{21},d_{22},t_{2},\sigma_{Z_{2}}^{2})+\frac{1}{2}\log\frac{\sigma_{S}^{4}}{D_{1}D_{2}} \\&=  r_{1}(d_{11}',d_{12}',t_{1}',\sigma_{Z_{1}}^{2})+r_{2}(d_{21}',d_{22}',t_{2}',\sigma_{Z_{2}}^{2})+I(U_{11};U_{12}|S,Y_{1})+I(U_{21};U_{22}|S,Y_{2})+\frac{1}{2}\log\frac{\sigma_{S}^{4}}{\delta_{2}\delta_{1}}.
\end{align*}
\end{lemma}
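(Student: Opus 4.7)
The plan is to match the two sides term-by-term by setting $d_{kl}' = d_{kl}$ for all $k,l \in \{1,2\}$. With this choice, the formulas defining $1/\delta_l$ in Lemma~\ref{thm:paramach} and $1/D_l$ in (\ref{eqn:d1eq})--(\ref{eqn:d2eq}) coincide, so $\delta_l = D_l$; the target identity then decouples across $k \in \{1,2\}$ into
\[
r_k(d_{k1},d_{k2},t_k,\sigma_{Z_k}^2) = r_k(d_{k1},d_{k2},t_k',\sigma_{Z_k}^2) + I(U_{k1};U_{k2}\mid S, Y_k).
\]
By Lemma~\ref{thm:paramlem}, each triple $(d_{k1},d_{k2},t_k)$ lies in either $\mathcal{F}_{k1}$ or $\mathcal{F}_{k2}$, and I will handle the two regimes separately.

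If $(d_{k1},d_{k2},t_k) \in \mathcal{F}_{k1}$, the intermediate value theorem applied to $g_k$ on $[0,\sigma_{N_k}^2]$ supplies $a_k$ with $g_k(a_k) = 0$. Taking this $a_k$ as the off-diagonal parameter of $\mathbf{K}_w$ and $\sigma_{W_{kl}}^2 = \alpha_{kl}$ (so $d_{kl}' = d_{kl}$), equation (\ref{eqn:tk1}) combined with $g_k(a_k) = 0$ forces $\alpha_{k0}' = \alpha_{k0}$, hence $t_k' = t_k$, and the two $r_k$-values are equal. I then pick $\sigma_{Z_k}^2 = \sigma_{N_k}^2 a_k/(\sigma_{N_k}^2 - a_k)$: a direct calculation gives $\Cov(U_{k1},U_{k2}\mid S,Y_k) = -a_k + \sigma_{N_k}^2\sigma_{Z_k}^2/(\sigma_{N_k}^2+\sigma_{Z_k}^2) = 0$, and by joint Gaussianity $I(U_{k1};U_{k2}\mid S,Y_k) = 0$. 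If instead $(d_{k1},d_{k2},t_k) \in \mathcal{F}_{k2}$, I take $a_k = 0$ and $\sigma_{Z_k}^2 = 0$. Then $Y_k = X_k$ and $I(U_{k1};U_{k2}\mid S,X_k) = I(W_{k1};W_{k2}) = 0$; moreover (\ref{eqn:rparam}) collapses to $r_k(d_1,d_2,t,0) = \log\sigma_{N_k}^2 - \tfrac{1}{2}\log(d_1 d_2)$, which is independent of $t$, so the $r_k$-values agree regardless of the $t_k'$ delivered by the Gaussian scheme.

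It remains to check $\bar{p}' \in \mathcal{F}$. Individual distortions are met with equality by $d_{kl}' = d_{kl}$. For the central distortion, the Gaussian scheme satisfies $1/D_0^{\text{Gauss}} = 1/\sigma_S^2 + \sum_k (1-e^{-2t_k'})/\sigma_{N_k}^2$ (as in \cite{Prabhakaran2004}). In case $\mathcal{F}_{k1}$, $t_k' = t_k$; in case $\mathcal{F}_{k2}$, (\ref{eqn:tk1}) with $a_k = 0$ yields $\sigma_{N_k}^2 e^{-2t_k'} = (1/d_{k1} + 1/d_{k2} - 1/\sigma_{N_k}^2)^{-1}$, and the defining inequality $g_k(0) \leq 0$ is exactly the statement $t_k' \geq t_k$. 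Either way $1/D_0^{\text{Gauss}} \geq 1/D_0$ by (\ref{eqn:d0eq}), so $\bar{p}' \in \mathcal{F}$.

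The main obstacle will be recognizing the decorrelating variance $\sigma_{Z_k}^2 = \sigma_{N_k}^2 a_k/(\sigma_{N_k}^2 - a_k)$ in case $\mathcal{F}_{k1}$: it is the unique value for which conditioning on $Y_k$ produces exactly the positive conditional correlation between $W_{k1}$ and $W_{k2}$ (given $S$) needed to cancel the built-in $-a_k$ entry of $\mathbf{K}_w$. More structurally, Lemmas~\ref{thm:paramlemma}--\ref{thm:paramlem} are engineered so that $g_k(a_k) = 0$ also implies $\alpha_{k1}\alpha_{k2} - a_k^2 = \alpha_{k0}(\alpha_{k1}+\alpha_{k2}+2a_k) > 0$, guaranteeing that $\mathbf{K}_w$ is positive definite and the Gaussian scheme is legitimate.
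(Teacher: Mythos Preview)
Your proposal is correct and follows essentially the same approach as the paper's proof: set $d_{kl}'=d_{kl}$ to force $\delta_l=D_l$, split via Lemma~\ref{thm:paramlem} into the two regimes $\mathcal{F}_{k1}$ and $\mathcal{F}_{k2}$, in the first solve $g_k(a_k)=0$ and choose the decorrelating $\sigma_{Z_k}^2=\sigma_{N_k}^2 a_k/(\sigma_{N_k}^2-a_k)$ to kill $I(U_{k1};U_{k2}\mid S,Y_k)$, and in the second take $a_k=0$, $\sigma_{Z_k}^2=0$ and use that $r_k(\cdot,\cdot,t,0)$ is independent of $t$. The only cosmetic difference is your positive-definiteness check for $\mathbf{K}_w$ via the identity $\alpha_{k1}\alpha_{k2}-a_k^2=\alpha_{k0}(\alpha_{k1}+\alpha_{k2}+2a_k)$, whereas the paper argues $\sigma_{W_{k1}}^2\sigma_{W_{k2}}^2\ge a_k^2$ directly from $1/(\alpha_{k0}+a_k)\le 1/a_k$; both are equivalent one-line computations.
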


\begin{proof}
The proof closely follows the discussion in Section 5 in \cite{Wang2007}. Let $\bar{p}=(d_{11},d_{12},d_{21},d_{22},t_{1},t_{2})\in\mathcal{P}$. Choosing $d_{kl}'=d_{kl}$ for $k=1,2$ and $l=1,2$, from (\ref{eqn:d1eq}) and (\ref{eqn:d2eq}), we know that
\begin{equation}\label{eqn:deltaeq}
\delta_{1}=D_{1} \quad \delta_{2}=D_{2}. 
\end{equation}
From Lemma \ref{thm:paramlem}, we know that $(d_{k1},d_{k2},t_{k})\in\mathcal{F}_{k1}\cup \mathcal{F}_{k2}$. By definition, $\mathcal{F}_{k1}\cap\mathcal{F}_{k2}=\phi$. We now consider two cases, $(d_{k1},d_{k2},t_{k})\in\mathcal{F}_{k1}$ and $(d_{k1},d_{k2},t_{k})\in\mathcal{F}_{k2}$. 

\subsection{Case 1: $(d_{k1},d_{k2},t_{k})\in \mathcal{F}_{k1}$}
Since $(d_{k1},d_{k2},t_{k})\in \mathcal{P}_{k1}$, $g_{k}(0)>0$ and $g_{k}(\sigma_{N_{k}}^{2})\leq 0$. Therefore, there exists an $a_{k}^{*}\in(0,\sigma_{N_{k}}^{2}]$ that solves $g_{k}(a_{k})=0$. We set $a_{k}=a_{k}^{*}$. Further, $d_{k1}'=d_{k1}$ and  $d_{k2}'=d_{k2}$ imply that $\sigma_{W_{k1}}^{2}=\alpha_{k1}$ and $\sigma_{W_{k2}}^{2}=\alpha_{k2}$. Therefore, we conclude from (\ref{eqn:tk1}) and $g_{k}(a_{k}^{*})=0$ that $t_{k}'=t_{k}$. We now need to show that this choice of $a_{k}^{*}$ is such that $\sigma_{W_{k1}}^{2}\sigma_{W_{k2}}^{2}\geq (a_{k}^{*})^{2}$. Since $\alpha_{k0}\geq 0$ and $a_{k}^{*}\in(0,\sigma_{N_{k}}^{2}]$, 
\begin{equation*}
\alpha_{k0}+a_{k}^{*}\geq a_{k}^{*} \Rightarrow \frac{1}{\alpha_{k0}+a_{k}^{*}} \leq \frac{1}{a_{k}^{*}}.
\end{equation*}
Since $g(a_{k}^{*})=0$, 
\begin{align*}
&\frac{1}{\sigma_{W_{k1}}^{2}+a_{k}^{*}}+\frac{1}{\sigma_{W_{k2}}^{2}+a_{k}^{*}} \leq \frac{1}{a_{k}^{*}}\\
\Rightarrow & \frac{1}{\sigma_{W_{k1}}^{2}+a_{k}^{*}} \leq \frac{1}{a_{k}^{*}}-\frac{1}{\sigma_{W_{k2}}^{2}+a_{k}^{*}}\\
\Rightarrow & \frac{1}{\sigma_{W_{k1}}^{2}+a_{k}^{*}} \leq \frac{\sigma_{W_{k2}}^{2}}{a_{k}^{*}(\sigma_{W_{k2}}^{2}+a_{k}^{*})}\\
\Rightarrow & \sigma_{W_{k1}}^{2}+a_{k}^{*} \geq \frac{(a_{k}^{*})^{2}}{\sigma_{W_{k2}}^{2}}+a_{k}^{*}\\
\Rightarrow & \sigma_{W_{k1}}^{2}\sigma_{W_{k2}}^{2} \geq (a_{k}^{*})^{2}.
\end{align*}

Moreover, trivially, 
\begin{equation*}
r_{k}(d_{k1},d_{k2},t_{k},\sigma_{Z_{k}}^{2}) =  r_{k}(d_{k1}',d_{k2}',t_{k}',\sigma_{Z_{k}}^{2})
\end{equation*}
Also, 
\begin{align*}
\Var(U_{kl}|S,Y_{k}) &= \sigma_{N_{k}}^{2}+\sigma_{W_{kl}}^{2}-\frac{\sigma_{N_{k}}^{4}}{\sigma_{N_{k}}^{2}+\sigma_{Z_{k}}^{2}}, l=1,2\\
\Cov(U_{k1},U_{k2}|S,Y_{k}) &= \sigma_{N_{k}}^{2}\left[\begin{array}{ccc} 1 & 1 \\ 1 & 1 \end{array}\right] + 
\left[\begin{array}{ccc} \sigma_{W_{k1}}^{2} & -a_{k} \\ -a_{k} & \sigma_{W_{k2}}^{2} \end{array}\right] -\frac{\sigma_{N_{k}}^{4}}{\sigma_{N_{k}}^{2}+\sigma_{Z_{k}}^{2}}\left[\begin{array}{ccc} 1 & 1 \\ 1 & 1 \end{array}\right]
\end{align*}
The off diagonal entries in $\Cov(U_{k1},U_{k2}|S,Y_{k})$ are zero if 
\begin{equation*}
\sigma_{N_{k}}^{2}-a_{k} = \frac{\sigma_{N_{k}}^{4}}{\sigma_{N_{k}}^{2}+\sigma_{Z_{k}}^{2}}. 
\end{equation*}
By choosing $\sigma_{Z_{k}}^{2}=\frac{a_{k}\sigma_{N_{k}}^{2}}{\sigma_{N_{k}}^{2}-a_{k}}$ in this case,
\begin{equation*}
\Var(U_{k1}|S,Y_{k})\Var(U_{k2}|S,Y_{k}) = |\Cov(U_{k1},U_{k2}|S,Y_{k})|
\end{equation*}
and $I(U_{k1};U_{k2}|S,Y_{k}) = 0$. Note that we are allowed to choose $\sigma_{Z_{k}}^{2}=\frac{a_{k}\sigma_{N_{k}}^{2}}{\sigma_{N_{k}}^{2}-a_{k}}$ since $a_{k}\in(0,\sigma_{N_{k}}^{2}]$ in this case. 

\subsection{Case 2: $(d_{k1},d_{k2},t_{k})\in \mathcal{F}_{k2}$}
In this case, we set $a_{k}=0$ in (\ref{eqn:tk}) and achieve the corresponding $t_{k}'$. Since, $d_{k1}'=d_{k1}$ and  $d_{k2}'=d_{k2}$, we have $\sigma_{W_{k1}}^{2}=\alpha_{k1}$ and $\sigma_{W_{k2}}^{2}=\alpha_{k2}$. It follows from (\ref{eqn:alpha}) and (\ref{eqn:tk1}) that 
\begin{equation*}
\frac{1}{\sigma_{N_{k}}^{2}}+ \frac{1}{\alpha_{k1}}+\frac{1}{\alpha_{k2}} = \frac{1}{\sigma_{N_{k}}^{2}e^{-2t_{k}'}}.
\end{equation*}
Since $g_{k}(0)\leq 0$, this implies that 
\begin{align*}
\frac{1}{\alpha_{k0}}&\leq \frac{1}{\alpha_{k1}}+\frac{1}{\alpha_{k2}}\\
\Rightarrow \frac{1}{\alpha_{k0}}+\frac{1}{\sigma_{N_{k}}^{2}}&\leq \frac{1}{\sigma_{N_{k}}^{2}}+ \frac{1}{\alpha_{k1}}+\frac{1}{\alpha_{k2}}\\
\Rightarrow \frac{1}{\sigma_{N_{k}}^{2}e^{-2t_{k}}}&\leq \frac{1}{\sigma_{N_{k}}^{2}}+ \frac{1}{\alpha_{k1}}+\frac{1}{\alpha_{k2}}\\
&=\frac{1}{\sigma_{N_{k}}^{2}e^{-2t_{k}'}}.
\end{align*}
Therefore, we get that $t_{k}\leq t_{k}'$. By achieving $t_{k}'$ instead of $t_{k}$, we still satisfy the central distortion constraint for the original problem and also ensure $(d_{k1}',d_{k2}',t_{k}')\in \mathcal{F}_{k}$. Further, we choose $\sigma_{Z_{k}}^{2}=0$ in this case. Therefore
\begin{equation*}
r_{k}(d_{k1},d_{k2},t_{k},0) = \frac{1}{2}\log\frac{\sigma_{N_{k}}^{4}}{d_{k1}d_{k2}} =  r_{k}(d_{k1},d_{k2},t_{k}',0),
\end{equation*}
Moreover, since $\sigma_{Z_{k}}^{2} = 0$ and $a_{k}=0$
\begin{equation*}
I(U_{k1};U_{k2}|S,Y_{k}) = I(U_{k1};U_{k2}|S,X_{k}) = 0.
\end{equation*}
\\
The lemma follows from the cases considered above. 
\end{proof}
Therefore, it follows from Lemma \ref{thm:lb} and Lemma \ref{thm:equivalence} that for every $\bar{p}\in\mathcal{P}$, there exists an achievable $\bar{p}'\in\mathcal{F}$ such that the sum rate achievable by the Gaussian scheme is equal to the lower bound on the sum rate. This proves the optimality of the Gaussian scheme for the sum rate of the vacationing-CEO problem. 

\section{Conclusion}
We introduced the vacationing-CEO problem which in essence, is a CEO problem with multiple descriptions. We described a Gaussian achievable scheme and presented a lower bound for the sum rate as an optimization problem over the code parameters. We also showed that the Gaussian scheme is optimal in terms of sum rate for a class of distortion constraints. Future work includes extending the result to other distortion regimes and considering a two terminal source coding problem with multiple descriptions. 

\appendices

\section{Proof of Lemma \ref{thm:achsum}}\label{sec:lemachsum}
In order to prove Lemma \ref{thm:achsum}, we need to show that $\forall \delta>0$, there exist $(R_{11}',R_{12}',R_{21}',R_{22}')$ and $(R_{11},R_{12},R_{21},R_{22})$ that satisfy (\ref{eqn:generrors}) and (\ref{eqn:decerrors}) such that 
\begin{equation*}
|R_{11}+R_{21}+R_{12}+R_{22} - I(X_{1},X_{2};U_{11},U_{12},U_{21},U_{22})-I(U_{11},U_{21};U_{12},U_{22})|\leq\delta. 
\end{equation*}
Let $\epsilon=\frac{\delta}{8}$ and $(U_{11},U_{12},U_{21},U_{22})\in\mathcal{U}$. We choose
\begin{align*}
R_{11}' &= I(X_{1};U_{11})+\epsilon  & R_{12}' &= I(X_{1};U_{12}|U_{11})+I(U_{11};U_{12})+\epsilon \\
R_{21}' &= I(X_{2};U_{21})+\epsilon  & R_{22}' &= I(X_{2};U_{22}|U_{21})+I(U_{21};U_{22})+\epsilon \\
R_{11} &= R_{11}'-I(U_{11};U_{21})+\epsilon &  R_{21} &= R_{21}'+\epsilon \\
R_{12} &= R_{12}'-I(U_{12};U_{22})+\epsilon &  R_{22} &= R_{22}'+\epsilon. 
\end{align*}
Note that $(R_{11}',R_{12}',R_{21}',R_{22}')$ satisfy (\ref{eqn:generrors}) and $(R_{11},R_{12},R_{21},R_{22})$ satisfy (\ref{eqn:decerrors}). Therefore,
\begin{align*}
R_{11}+R_{21}+R_{12}+R_{22} = & R_{11}'+R_{12}'+R_{21}'+R_{22}'-I(U_{11};U_{21})-I(U_{12};U_{22}) +4\epsilon\\
= & I(X_{1};U_{11},U_{12})+I(U_{11};U_{12})+I(X_{2};U_{21},U_{22})+I(U_{21};U_{22}) \\
&-I(U_{11};U_{21})-I(U_{12};U_{22})+8\epsilon\\
= &I(X_{1},X_{2};U_{11},U_{12},U_{21},U_{22})+I(U_{11},U_{21};U_{12},U_{22})+\delta. 
\end{align*}
Allowing $\delta\rightarrow 0$,  we see that the Gaussian scheme achieves the sum rate 
\begin{equation*}
\inf_{(U_{11},U_{12},U_{21},U_{22})\in\mathcal{U}}I(X_{1},X_{2};U_{11},U_{12},U_{21},U_{22})+I(U_{11},U_{21};U_{12},U_{22}). 
\end{equation*}

\section{Proof of Lemma \ref{thm:paramach}}\label{sec:paramach}
By procedural steps, we have 
\begin{align*}
I(X_{1},X_{2};U_{11},U_{21},U_{12},U_{22})+&I(U_{11},U_{21};U_{12},U_{22})\\&= I(S;U_{11},U_{21},U_{12},U_{22})+I(X_{1},X_{2};U_{11},U_{21},U_{12},U_{22}|S)\\
& \quad +I(U_{11},U_{21};U_{12},U_{22})\\
&=  I(S;U_{11},U_{21},U_{12},U_{22}) + I(X_{1};U_{11},U_{12}|S)+I(X_{2};U_{21},U_{22}|S)\\
& \quad +I(U_{11},U_{21};U_{12},U_{22})\\
&=  I(S;U_{11},U_{21},U_{12},U_{22}) + t_{1}'+t_{2}'+I(U_{11},U_{21};U_{12},U_{22}).
\end{align*}
Recall that $Y_{1}=X_{1}+Z_{1}$ and $Y_{2}=X_{2}+Z_{2}$ where $Z_{1}$ and $Z_{2}$ are Gaussians with mean zero and variance $\sigma_{Z_{1}}^{2}$ and $\sigma_{Z_{2}}^{2}$ and independent of $S$, $X_{1}$ and $X_{2}$. Now,
\begin{align}
I(U_{11},U_{21};U_{12},U_{22}) =& h(U_{11},U_{21})+h(U_{12},U_{22})-h(U_{11},U_{21},U_{12},U_{22})\nonumber\\
=& h(U_{11},U_{21})+h(U_{12},U_{22})-h(U_{11},U_{21},U_{12},U_{22})\nonumber\\
& -h(U_{11},U_{21}|S,Y_{1},Y_{2})-h(U_{12},U_{22}|S,Y_{1},Y_{2})\nonumber\\
& +h(U_{11},U_{21},U_{12},U_{22}|S,Y_{1},Y_{2}) + I(U_{11},U_{21};U_{12},U_{22}|Y_{1},Y_{2},S)\nonumber\\
=& I(S,Y_{1},Y_{2};U_{11},U_{21})+I(S,Y_{1},Y_{2};U_{12},U_{22})\nonumber\\
& -I(S,Y_{1},Y_{2};U_{11},U_{12},U_{21},U_{22})+I(U_{11};U_{12}|S,Y_{1})+I(U_{21};U_{22}|S,Y_{2}). \label{eqn:step11}
\end{align}
For $l=1,2$, let $\delta_{l}=\sigma_{S}^{2}e^{-2I(S;U_{1l},U_{2l})}$. Now, we can compute mutual information expressions between Gaussian random variables or use the fact that Gaussian random variables satisfy Lemma 3.1 in \cite{Prabhakaran2004} with equality to conclude that, 
\begin{align*}
\frac{1}{\sigma_{S}^{2}}e^{2I(S;U_{1l},U_{2l})} &= 
\frac{1}{\sigma_{S}^{2}} + \frac{1-e^{-2I(X_{1};U_{1l}|S)}}{\sigma_{N_{1}}^{2}} + \frac{1-e^{-2I(X_{2};U_{2l}|S)}}{\sigma_{N_{2}}^{2}}\\
\Rightarrow \frac{1}{\delta_{l}} &= \frac{1}{\sigma_{S}^{2}} + \frac{1}{\sigma_{N_{1}}^{2}} + \frac{1}{\sigma_{N_{2}}^{2}}-\frac{d_{1l}'}{\sigma_{N_{1}}^{4}}-\frac{d_{2l}'}{\sigma_{N_{2}}^{4}}. 
\end{align*}
Therefore, 
\begin{align}
I(S,Y_{1},Y_{2};U_{1l},U_{2l}) &= I(S;U_{1l},U_{2l})+I(Y_{1};U_{1l}|S)+I(Y_{2};U_{2l}|S)\nonumber\\
&= \frac{1}{2}\log \frac{\sigma_{S}^{2}(\sigma_{N_{1}}^{2}+\sigma_{Z_{1}}^{2})(\sigma_{N_{2}}^{2}+\sigma_{Z_{2}}^{2})}{\delta_{l}(d_{1l}'+\sigma_{Z_{1}}^{2})(d_{2l}'+\sigma_{Z_{2}}^{2})}. \label{eqn:step21}
\end{align}
Observe that
\begin{align}
I(S,Y_{1},Y_{2};U_{11},U_{12},U_{21},U_{22}) &= I(S;U_{11},U_{21},U_{12},U_{22})+I(Y_{1},Y_{2};U_{11},U_{21},U_{12},U_{22}|S)\nonumber\\
&= I(S;U_{11},U_{21},U_{12},U_{22})+I(Y_{1};U_{11},U_{12}|S)+I(Y_{2};U_{21},U_{22}|S), \label{eqn:step31}
\end{align}
and for $k=1,2$
\begin{align}
I(Y_{k};U_{k1},U_{k2}|S) &= -h(Y_{k}|S,U_{k1},U_{k2})+h(Y_{k}|S)\nonumber\\
&\overset{(a)}{=} -\frac{1}{2}\log(e^{\frac{2}{n}h(X_{k}|S,U_{k1},U_{k2})}+e^{\frac{2}{n}h(Z_{k})})+h(Y_{k}|S)\nonumber\\
&= -\frac{1}{2}\log(e^{\frac{2}{n}(h(X_{k}|S)-I(X_{k};U_{k1},U_{k2}|S))}+e^{\frac{2}{n}h(Z_{k})})+h(Y_{k}|S)\nonumber\\
&= -\frac{1}{2}\log(\sigma_{N_{k}}^{2}e^{-2t_{k}'}+\sigma_{Z_{k}}^{2})+\frac{1}{2}\log(\sigma_{N_{k}}^{2}+\sigma_{Z_{k}}^{2}), \label{eqn:step41}
\end{align}
where (a) follows from EPI for Gaussians. From (\ref{eqn:step11}), (\ref{eqn:step21}), (\ref{eqn:step31}) and (\ref{eqn:step41}),
\begin{align*}
I(U_{11},U_{21};U_{12},U_{22}) = & \frac{1}{2}\log\frac{(\sigma_{N_{1}}^{2}+\sigma_{Z_{1}}^{2})(\sigma_{N_{2}}^{2}+\sigma_{Z_{2}}^{2})\sigma_{S}^{4}}{(d_{12}'+\sigma_{Z_{1}}^{2})(d_{22}'+\sigma_{Z_{2}}^{2})(d_{11}'+\sigma_{Z_{1}}^{2})(d_{21}'+\sigma_{Z_{2}}^{2})\delta_{1}\delta_{2}}\\
& + \frac{1}{2}\log(\sigma_{N_{1}}^{2}e^{-2t_{1}'}+\sigma_{Z_{1}}^{2})(\sigma_{N_{2}}^{2}e^{-2t_{2}'}+\sigma_{Z_{2}}^{2})-I(S;U_{11},U_{21},U_{12},U_{22})\\
& +I(U_{11};U_{12}|S,Y_{1})+I(U_{21};U_{22}|S,Y_{2})
\end{align*}
and 
\begin{align*}
&I(X_{1},X_{2};U_{11},U_{12},U_{21},U_{22})+I(U_{11},U_{21};U_{12},U_{22}) \nonumber\\
& = r_{1}(d_{11}',d_{12}',t_{1}',\sigma_{Z_{1}}^{2}) + r_{2}(d_{21}',d_{22}',t_{2}',\sigma_{Z_{2}}^{2})
 +I(U_{11};U_{12}|S,Y_{1})+I(U_{21};U_{22}|S,Y_{2})+\frac{1}{2}\log\frac{\sigma_{S}^{4}}{\delta_{1}\delta_{2}}.
\end{align*}

% Generated by IEEEtran.bst, version: 1.13 (2008/09/30)

\end{document}